\def\doi{8 (1:03) 2012}
\theoremstyle{plain}
\theoremstyle{plain}\newtheorem{lemma}[thm]{Lemma}
\theoremstyle{definition}\newtheorem{definition}[thm]{Definition}
\theoremstyle{definition}\newtheorem{remark}[thm]{Remark}
\theoremstyle{plain}\newtheorem{theorem}[thm]{Theorem}
\DeclareRobustCommand*\cal{\relax\mathcal}
\begin{document}

\title[On completeness of reducibility candidates]{On completeness of reducibility candidates as a semantics of strong normalization}

\author[D.~Cousineau]{Denis Cousineau}	
\address{INRIA Saclay - \^Ile de France.}	
\email{denis@cousineau.eu}  
\thanks{This work was partly supported by INRIA-Microsoft Research joint lab, France.}	



\keywords{completeness theorem, strong normalization, reducibility candidates, minimal deduction modulo, Church proof-terms, Curry proof-terms}
\subjclass{F.4.1}



\begin{abstract}
 
 This paper defines a sound and complete semantic criterion, based on reducibility candidates, for strong normalization of theories expressed in minimal deduction modulo {\em \`a la Curry}. The use of Curry-style proof-terms allows to build this criterion on the classic notion of pre-Heyting algebras and makes that criterion concern {\em all} theories expressed in minimal deduction modulo. Compared to using Church-style proof-terms, this method provides both a simpler definition of the criterion and a simpler proof of its completeness.
 
  \end{abstract}

\maketitle

\section*{Introduction}\label{S:one}


\medskip

\noindent In 1936, Tarski was the first to {\em formally} exhibit a link between model theory and proof theory \cite{Tarski,Tarskitrans}. Model theory is the study of {\em what is semantically true}, via the study of algebraic structures. Whereas proof theory is the study of {\em what is syntactically provable}, via the study of logical systems. Tarski showed that we can deduce properties in the syntactic world from properties in the semantic world. In particular, he proved that consistency of a theory  in first order logic is entailed by the existence of a model on some Boole algebra for that theory ({\em i.e.} a function, satisfying some properties, from the syntactic language to particular semantic objects).
Consistency is a syntactic property that ensures that a theory does not contain any contradiction. If two propositions are contradictory, then there is at least one of them that we cannot prove in a consistent theory. Tarski's result shows that the existence of such a Boole-valued model forms a {\em sound} (semantic) criterion for (syntactic) consistency of first order theories.	

\medskip

\noindent On the other hand, G\"odel had previously shown, in its completeness theorem \cite{Godel}, that we can also deduce semantic properties from syntactic ones. This theorem states that if a theory is consistent in first order logic, then one can build a Boole-valued model for that theory. Hence the existence of a Boole-valued model forms a sound {\em and complete} criterion for consistency of first order theories.
This link in both directions between semantic and syntactic worlds is a fundamental tool to study consistency of logic systems. For example, it allows to prove that the axiom of choice is independent from Zermelo-Fraenkel set theory (ZF). This means that adding the axiom of choice or its negation to the axioms of ZF does not change consistency. 
The idea of the proof is to first exhibit, by completeness and the hypothesis of syntactic consistency of ZF, a Boole-valued model of ZF in the semantic world. Then semantic tools can be used to transform it into a Boole-valued model of ZF with the axiom of choice, ZFC ({\em resp.} ZF with the negation of the axiom of choice, ZF$\neg$C). Finally, it entails consistency of ZFC ({\em resp.} ZF$\neg$C), by soundness. The ZF to ZFC part was proved by G\"odel \cite{GodelcontCH} while the ZF to ZF$\neg$C part was proved by Cohen \cite{CohenCH}.

\medskip

\noindent From a computer scientist point of view, consistency is not a sufficient property. Indeed, in order to define theories in which all (constructive) proofs can be machine-checked, a stronger property is needed, namely {\em cut elimination}, meaning that all proofs can be represented in a canonical {\em cut-free} way.

%

%

\medskip

\noindent Deduction modulo~\cite{DHK} is a generic way to integrate
computation rules into a deduction system, such as natural deduction
or sequent calculus.
In this paper, we shall only consider the case of natural deduction.
Deduction modulo can express theories with rewrite rules instead of axioms,
({\em e.g.} 
Peano arithmetic~\cite{GDowBWer05}, higher-order
logic~\cite{DHK} and Zermelo set
theory~\cite{GDowAMiq07}). Expressing axioms via rewrite rules allows to express the notion of
an axiomatic cut through a combination of regular cuts, leading to 
a uniform notion for cuts and cut elimination.
This 
gives a  
generic method to prove cut elimination for theories expressed in
natural deduction modulo, which consists in proving the strong normalization
property for the corresponding proof-terms, via the proofs-as-programs
paradigm (a.k.a.\ the Curry-Howard correspondence). In 
deduction modulo, if all proof-terms of a theory are strongly normalizing, 
then this theory satisfies the cut elimination property (and is moreover consistent in the case of {\em constructive} deduction modulo).

\medskip

\noindent In 1971, following the work of Tait \cite{Tait66}, Girard developed an apparently syntactic method for proving strong normalization, called {\em reducibility candidates} \cite{Girard71}. The main idea of this method is to associate to each proposition $A$, a set of strongly normalizing proofs and then prove that this set contains all the proofs of $A$. 
This method has been extended to several logical frameworks. In particular, Dowek and Werner defined reducibility candidates for deduction modulo \cite{DW}. Their extension provides a {\em sound} criterion for strong normalization of theories expressed in deduction modulo. Dowek also defined {\em pre-Heyting algebras} \cite{DOW-TVA}
and showed that reducibility candidates can be defined as a model valued on one of these algebras.
He provided this way a {\em semantic} sound criterion for strong normalization of theories expressed in deduction modulo
(when expressing proof-terms of deduction modulo with a system {\em \`a la Church}).

\medskip

\noindent 

\noindent Proof-terms \`a la Church differ from proof-terms \`a la Curry by
the amount of information, concerning the proving derivation, that is kept in the corresponding proof-term.
For example, in second-order logic, there are two type systems that capture
the computational contents of cut elimination: Church-style
system~$F$~\cite{Girard71} and Curry-style
system~$F$~\cite{Leivant83}.
The latter differs from the former in that it does not keep track of
introduction and elimination steps of second-order
quantification---and thus of the corresponding cuts.
Surprisingly, it can be shown by purely combinatorial means (as
opposed to semantic means) that the strong normalization property for
Church-style system~$F$ is equivalent to the strong normalization
property for its Curry-style variant.
Church-style or Curry-style proof-terms can indifferently be considered
to prove the cut elimination property for second-order logic, and in
practice, it is easier to prove the strong normalization property for
Curry-style system~$F$, which is based on notions of reducibility
candidates that are technically easier to manipulate than the
corresponding notions in Church's world.
A similar situation exists more generally between all the Pure Type
Systems~\cite{BG01} of the left-hand side of Barendregt's cube and the
corresponding \emph{Type Assignment Systems}~\cite{BLRU94} (their
Curry-style equivalents); or for second-order functional arithmetic
\cite{Kri93}, where both presentations coexist.
  In \cite{CouMiq}, with A. Miquel, we proved that strong normalization
of proof-terms \`a la Church and proof-terms \`a la Curry are equivalent for a large 
class of theories  ({\em non-confusing} theories, see definition \ref{confusion}) expressed in minimal deduction modulo
(deduction modulo where the language of
propositions uses implication as the only connective and (first-order) universal
quantification as the only quantifier) .

\medskip

\noindent The general purpose of my work is to define a sound and complete semantic criterion for strong normalization of theories expressed in minimal deduction modulo. In a previous paper \cite{pstt} (see also \cite{ Theseoim}), I extended the notion of pre-Heyting algebra and tuned the definition of reducibility candidates to obtain a sound and complete semantic criterion for strong normalization of {\em non-confusing} theories expressed in minimal deduction modulo {\em \`a la Church}. Considering proof-terms \`a la Church brought a lot of difficulties, in particular because the classic notion of pre-Heyting algebras cannot be used. This paper considers minimal deduction modulo {\em \`a la Curry}.
It provides a stronger result since it also applies to confusing theories. Moreover, using Curry-style deduction modulo allows to use the classic notion of pre-Heyting algebra, simplifying this way both the concept of those complete reducibility candidates and their proof of completeness.

\medskip 

\noindent This paper is organized as follows: in section 1, we first define minimal deduction modulo \`a la Church and \`a la Curry. Section 2  presents the concept of reducibility candidates and how to define them as a model valued on a pre-Heyting algebra (for minimal deduction modulo). In section 3, we explain how to tune the usual definition of reducibility candidates in order to obtain completeness while keeping soundness for strong normalization. And finally, section\ 4  formally defines this new notion of reducibility candidates as a model valued on a pre-Heyting algebra, and proves that it provides a sound and complete semantic criterion for strong normalization of theories expressed in minimal deduction modulo \`a la Curry (theorem \ref{compsound}).

\bigskip
\bigskip
\section{Minimal deduction modulo}


\noindent As in first-order logic, the language of a theory in deduction
modulo~\cite{HSchATro96} is obtained from a signature defining a set of
function symbols and a set of predicate symbols given with their
ranks, or arities.
For convenience, we shall only consider the case of mono-sorted
theories. We are convinced that results of the present paper trivially extend to many-sorted theories.
In minimal deduction modulo, propositions are then built-up from
predicates, with the only connective~$\Ra$ and the only
quantifier~$\forall$.

\medskip

\noindent Given a language of terms and propositions, a theory in deduction
modulo is defined not by a system of axioms, but by a system of
rewrite rules on terms and propositions.
Since in this paper we are not interested in the rewrite system
itself, but only in the congruence that it generates, we shall more
generally work with an arbitrary congruence over
propositions.

\medskip

\noindent Once the congruence over propositions has been fixed, the
principle of deduction modulo consists in adapting deduction rules
of natural deduction by allowing a proposition to be replaced by any
congruent proposition at each deduction step.

\subsection{Theories in minimal deduction modulo} \hfill\\

\noindent In this section, we present the definitions for the syntax of minimal deduction modulo (with proof-terms \`a la Curry {\em or} \`a la Church) and some basic properties.

\begin{definition}[Terms and propositions]\hfill\\
Given an infinite set of {\em term
  variables} (notation: $x$, $y$, $z$, etc.)\ as well as a first-order
signature defining a set of {\em function symbols} (notation $f$,
$f'$, etc.)\ and a non-empty set of {\em predicate symbols}%
\footnote{We need to assume that the set of predicate symbols is not
  empty to ensure that the language of propositions generated from the
  signature is not empty.}
(notation: $P$, $P'$, etc.), each function or predicate symbol being
given with a natural number called its {\em rank}, or {\em arity},
the formation rules for terms and propositions are the usual ones:
\begin{iteMize}{$\bullet$}
\item If $x$ is a variable, then~$x$ is a term.
\item If $f$ is a function symbol of rank $n$, and $t_1,\dots,t_n$
  are terms,\\
  then $f(t_1,\ldots,t_n)$ is a term.
\item If $P$ is a predicate symbol of rank $n$, and 
  $t_1,\dots,t_n$ are terms,\\
  then $P(t_1,\ldots,t_n)$ is an (atomic) proposition.
\item If $A$ and~$B$ are propositions, then so is $A\Ra B$.
\item If~$x$ is a variable and $A$ is a proposition,
  then $\forall x.A$ is a proposition.
\end{iteMize}
\end{definition}


\noindent As usual, propositions are considered modulo $\alpha$-conversion on term-variables.
We call {\em free} the term-variables of a proposition that are not bound by a universal quantification. 
The set of free term-variables of a proposition~$A$
is written $FV(A)$.
The operation of (capture avoiding) substitution is defined as usual,
and given terms~$t$, $u$ and a proposition~$A$, we denote by $(u/x)t$
(resp.\ $(u/x)A$) the term (resp.\ the proposition) obtained by
replacing every (free) occurrence of the variable~$x$ by the term~$u$ in
the term~$t$ (resp.\ in the proposition~$A$).
\begin{definition}[Congruence relation]\hfill\\
Given a first order signature, a theory is defined by a {\em congruence}~$\equiv$ over propositions 
\\ {\em i.e.} an equivalence relation such that for all term-variables $x$ and propositions $A$,$A'$,$B$,$B'$, 

  \begin{iteMize}{$\bullet$}
\item If $A\equiv A'$ and $B\equiv B'$, then
  $A\Ra B\equiv A'\Ra B'$.
\item If $A\equiv A'$, then $\forall x.A\equiv\forall x.A'$.
\end{iteMize}
\end{definition}

\smallskip

\noindent We now present two different systems of proof-terms for minimal
deduction modulo, depending on the amount of information we keep from
the derivation. Both are formed from an auxiliary set of
proof-variables, written $\alpha$, $\beta$, $\gamma$, etc.

\begin{definition}[Proof-terms \`a la Church]\hfill\\
The Church-style proof-terms system for minimal deduction modulo contains two forms
of $\lambda$-abstraction (one for introducing implication and one for
introducing universal quantification) and two forms of
application (for the corresponding eliminations).
Formally, the proof-terms (notation: $\pi$, $\pi'$, etc.)  are defined by:
$$\pi\quad::=\quad\alpha\quad|\quad
\lambda\alpha.\pi\quad|\quad\pi\pi
\quad|\quad\lambda x.\pi\quad|\quad\pi t
$$
Notice that the first form of $\lambda$-abstraction binds a proof-variable
whereas the second form binds a term variable.
\end{definition}

\begin{definition}[Proof-terms \`a la Curry]\hfill\\
The Curry-style proof-terms system  does not keep track anymore of the introductions and eliminations
of universal quantifications.
As a consequence, proof-terms are just pure $\lambda$-terms:
$$\pi\quad::=\quad\alpha\quad|\quad
\lambda\alpha.\pi\quad|\quad\pi\pi
$$
\end{definition}

\medskip

\noindent We call {\em neutral} those proof-terms which are not $\lambda$-abstractions. This notion of neutrality is fundamental in the definition of reducibility candidates.

\begin{definition}[Neutral proof-terms]\hfill\\
{\em Neutral} proof-terms are proof-terms of the form:
\begin{iteMize}{$\bullet$}
	\item $\alpha$, $\pi\pi'$ or $\pi~t$ in Church style,
	\item  $\alpha$ or $\pi\pi'$ in Curry style.
\end{iteMize}
\end{definition}

\medskip
\begin{definition}[Substitution]\hfill\\
 For both systems, the operation of (capture
avoiding) substitution is defined as expected.
In Church-system, we denote by $(t/x)\pi$ the proof-term
obtained by replacing every free occurrence of the first-order
variable~$x$ by the first-order term~$t$ in the proof-term~$\pi$.
In both Church-system and Curry-system, we denote by $(\pi'/\alpha)\pi$ the proof-term
obtained by replacing every free occurrence of the
proof-variable~$\alpha$ by the proof-term~$\pi'$ in the
proof-term~$\pi$.
\end{definition}


\medskip
\noindent We now define the type systems corresponding to those two proof-terms systems
(notice that we will use indifferently the words typing and deduction from now, via the proofs-as-programs paradigm).

\begin{definition}[Typing contexts]\hfill\\
A {\em typing context}, is a finite list of the form
$\quad\alpha_1:A_1,\ldots,\alpha_n:A_n\quad$
where $\alpha_1,\ldots,\alpha_n$ are pairwise distinct
proof-variables, and where~$A_1,\ldots,A_n$ are arbitrary
propositions. \linebreak
Given a typing context $\Gamma=\alpha_1:A_1,\ldots,\alpha_n:A_n$, we
write $FV(\Gamma)=FV(A_1)\cup\cdots\cup FV(A_n)$.
For all contexts~$\Gamma$ and~$\Gamma'$, we write
$\Gamma\subseteq\Gamma'$ when $(\alpha:A)\in\Gamma$ implies
$(\alpha:A)\in\Gamma'$ for all declarations $(\alpha:A)$.
\end{definition}

\noindent The typing rules for systems \`a la Church and \`a la Curry are given
in Fig.~\ref{f:Typing}.
These typing rules are the usual typing rules of natural deduction
(for minimal predicate logic) adapted to the framework of deduction
modulo, so that a proposition can always be replaced by a congruent
proposition at each step of the derivation.
For example, from a context containing a proposition $A$, the
\textsc{axiom} rule permits to derive any proposition~$A'$ that is
congruent to~$A$, and not only~$A$.

\begin{figure}[htp]
\renewcommand{\arraystretch} {3}
\begin{center}
\begin{tabular}{|c||c||p{70mm}|}
\hline
\rule[-1.5mm]{0mm}{5mm} 
	\textsc{axiom}  &  ~~Curry, Church~~ & 
	
	\ghost{8}
	$\irule{}
	{\Gamma,\alpha:A \vdash \alpha:B}
{	{\scriptsize \mbox{~~$A \equiv B$}}}$ \\
 
\hline \hline
\rule[-0mm]{0mm}{8mm} 
	~~$\Ra\textsc{-elim}$~~~~  &  Curry, Church & 
	\ghost{8}
	\irule{\Gamma \vdash \pi:C\hspace{0.5cm}\Gamma' \vdash \pi':A}
        {\Gamma \Gamma' \vdash (\pi~\pi'):B}
        {{\scriptsize \mbox{~~$C \equiv A \Ra B$}}}\\
        
 \hline \hline
\rule[-0mm]{0mm}{8mm} 
	$\Ra\textsc{-intro}$  &  Curry, Church & 
	\ghost{8}
	$\irule{\Gamma, \alpha:A \vdash \pi:B}
        {\Gamma \vdash \lambda \alpha.~\pi:C}
        {{\scriptsize \mbox{~~$C \equiv A \Ra B$}}}$\\

\hline \hline
\rule[-0mm]{0mm}{8mm} 
	$\forall\textsc{-elim}$  &  Church & 
	\ghost{8}
	$\irule{\Gamma \vdash \pi:B}
       					 {\Gamma \vdash \pi~t:C}
      					 {{\scriptsize \mbox{ ~~$B \equiv \forall x.A$,~~$C \equiv (t/x)A$}}}$\\

\cline{2-3}
\rule[-1mm]{0mm}{8mm} 
	  &  Curry & 
	\ghost{8}
	$\irule{\Gamma \vdash \pi:B}
       					 {\Gamma \vdash \pi:C}
      					 {{\scriptsize \mbox{ ~~$B \equiv \forall x.A$,~~$C \equiv (t/x)A$}}}$\\

\hline \hline
\rule[-0mm]{0mm}{8mm} 
	$\forall\textsc{-intro}$  &  Church & 
	\ghost{8}
	$\irule{\Gamma \vdash \pi:A}
       	{\Gamma \vdash \lambda x.\pi:B}
       	{{\scriptsize \mbox{ ~~$B \equiv \forall x.A,~~  x \not\in FV(\Gamma)$}}}$\\

\cline{2-3}
\rule[-1mm]{0mm}{8mm} 
	  &  Curry & 
	\ghost{8}
	$\irule{\Gamma \vdash \pi:A}
       	{\Gamma \vdash \pi : B}
       	{{\scriptsize \mbox{ ~~$B \equiv \forall x.A,~~  x \not\in FV(\Gamma)$}}}$\\

\hline 
\end{tabular}
\end{center}
\caption{Typing rules}
\label{f:Typing}
\end{figure}

\smallskip

\noindent We prove now some basic properties
of the two typing systems presented in Fig.~\ref{f:Typing}:
\begin{lemma}[Weakening]\label{l:Weakening}
  If $\Gamma\vdash\pi$ and
  $\Gamma\subseteq\Gamma'$, then $\Gamma\vdash\pi$.
\end{lemma}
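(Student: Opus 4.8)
The plan is to argue by induction on the derivation of the typing judgment $\Gamma\vdash\pi:A$ (a plain induction on the structure of $\pi$ does not suffice, since in Curry style the rules $\forall$-elim and $\forall$-intro leave $\pi$ unchanged, so the last rule must be recovered from the derivation), proving the sharper statement that whenever $\Gamma\vdash\pi:A$ and $\Gamma\subseteq\Gamma'$ one has $\Gamma'\vdash\pi:A$. Throughout, the hypothesis $\Gamma\subseteq\Gamma'$ is used in exactly one form: every declaration $(\alpha:B)\in\Gamma$ is still a declaration of $\Gamma'$, so moving from $\Gamma$ to $\Gamma'$ only adds declarations and never removes or alters one.

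The easy cases are the \textsc{axiom} rule and the elimination rules. For \textsc{axiom}, the conclusion $\Gamma\vdash\alpha:B$ arises from a declaration $(\alpha:A)\in\Gamma$ with $A\equiv B$; since $\Gamma\subseteq\Gamma'$ we still have $(\alpha:A)\in\Gamma'$, and the very same rule instance yields $\Gamma'\vdash\alpha:B$. For $\Ra$-elim, the two subderivations carry contexts $\Gamma_1$ and $\Gamma_2$ with $\Gamma=\Gamma_1\Gamma_2$; from $\Gamma_1\subseteq\Gamma\subseteq\Gamma'$ and $\Gamma_2\subseteq\Gamma\subseteq\Gamma'$ the induction hypothesis gives $\Gamma'\vdash\pi:C$ and $\Gamma'\vdash\pi':A$, and reapplying the rule (reading the concatenated context of its conclusion as the common context $\Gamma'$, i.e.\ up to the admissible contraction of duplicated declarations) produces $\Gamma'\vdash(\pi~\pi'):B$. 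The $\forall$-elim rule leaves the context unchanged, so a single appeal to the induction hypothesis on its premise closes that case.

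The introduction rules are where freshness must be managed, and this is the main obstacle. In $\Ra$-intro the premise is $\Gamma,\alpha:A\vdash\pi:B$; to invoke the induction hypothesis against $\Gamma,\alpha:A\subseteq\Gamma',\alpha:A$ I first $\alpha$-rename the bound proof-variable $\alpha$ to one fresh for $\Gamma'$, so that $\Gamma',\alpha:A$ is a genuine context (its proof-variables remain pairwise distinct), and then conclude $\Gamma'\vdash\lambda\alpha.\pi:C$. The genuinely delicate case is $\forall$-intro, whose side condition $x\notin FV(\Gamma)$ need not survive the enlargement to $\Gamma'$ when $\Gamma'$ carries new free term-variables. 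Here I exploit that propositions are taken modulo $\alpha$-conversion: since $x$ is bound in $B\equiv\forall x.A$, I choose the representative $\forall x'.(x'/x)A$ with $x'$ fresh for $\Gamma'$, re-derive the premise with $x'$ in place of $x$ (by a routine term-variable renaming lemma, applicable precisely because $x\notin FV(\Gamma)$), apply the induction hypothesis, and reinstate the rule with the side condition $x'\notin FV(\Gamma')$ now satisfied.
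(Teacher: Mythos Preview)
Your proof is correct and follows the same approach as the paper, which simply states ``By induction on the derivation of~$\Gamma\vdash\pi$.'' You have supplied the case analysis the paper omits, including the genuine freshness subtleties in the $\Ra$-intro and $\forall$-intro cases that the paper's one-liner glosses over.
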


\begin{proof}
  By induction on the derivation of~$\Gamma\vdash\pi$.
\end{proof}

\begin{lemma}[Substitutivity]\label{l:Substitutivity}~\vspace{-4mm}\\ 
  \begin{enumerate}[\em(1)]
  \item If $\Gamma_1,\alpha:A,\Gamma_2\vdash\pi:B$
    and $\Gamma_1\vdash\pi':A$, then
    $\Gamma_1,\Gamma_2\vdash(\pi'/\alpha)\pi:B$.
    \smallskip
  \item In Church-style system, if $\Gamma\vdash\pi:A$, then
    $(t/x)\Gamma\vdash(t/x)\pi:(t/x)A$.
    \smallskip
      \item In Curry-style system, if $~\Gamma\vdash\pi:A$, then
    $(t/x)\Gamma\vdash\pi:(t/x)A$.
   
  \end{enumerate}
  (where~$t$ is an arbitrary first-order term).
\end{lemma}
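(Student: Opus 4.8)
The plan is to prove each of the three statements by induction on the structure of the given typing derivation, splitting on the last rule applied. Before starting, I would record one auxiliary observation that is immediate from Fig.~\ref{f:Typing}: since the proposition in the conclusion of every rule is constrained only up to $\equiv$ by its side condition, the set of types derivable for a fixed proof-term in a fixed context is closed under $\equiv$; that is, whenever $\Gamma\vdash\pi:A$ and $A\equiv B$ one also has $\Gamma\vdash\pi:B$. I would also use freely the \emph{Weakening} lemma and the fact that $\equiv$ is stable under first-order substitution, i.e.\ $A\equiv B$ implies $(t/x)A\equiv(t/x)B$, which is the standard requirement on deduction-modulo congruences.

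For item (1) I would induct on the derivation of $\Gamma_1,\alpha:A,\Gamma_2\vdash\pi:B$. The only delicate case is the \textsc{axiom} rule applied to the variable $\alpha$ itself: there $B\equiv A$ and $(\pi'/\alpha)\pi=\pi'$, so I would weaken $\Gamma_1\vdash\pi':A$ to $\Gamma_1,\Gamma_2\vdash\pi':A$ and then transport the type along $A\equiv B$ via the closure observation to obtain $\Gamma_1,\Gamma_2\vdash\pi':B$. If the \textsc{axiom} rule types a variable $\beta\neq\alpha$, then $(\pi'/\alpha)\beta=\beta$ and the same axiom applies in the reduced context. Each remaining rule is handled by applying the induction hypothesis to its premises and re-applying the same rule; the proof-variable bound by $\Ra\textsc{-intro}$ and the term-variable bound by $\forall\textsc{-intro}$ are assumed, by $\alpha$-conversion, to be fresh with respect to $\alpha$ and to the free variables of $\pi'$, so that the substitution commutes with the binders and the side condition $x\notin FV(\Gamma)$ survives the passage from $\Gamma_1,\alpha:A,\Gamma_2$ to $\Gamma_1,\Gamma_2$.

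Items (2) and (3) I would treat together by the same induction, now on $\Gamma\vdash\pi:A$, applying the substitution $(t/x)$ uniformly to context, subject and type (the subject being left untouched in the Curry case, where $\forall$-introduction and $\forall$-elimination do not alter the proof-term). In the \textsc{axiom} case, a declaration $\alpha:A'\in\Gamma$ with $A'\equiv B$ becomes $\alpha:(t/x)A'$ in $(t/x)\Gamma$, and stability of $\equiv$ under substitution gives $(t/x)A'\equiv(t/x)B$, so the axiom still applies. In the $\forall$ rules I would first $\alpha$-rename the bound term-variable so that it differs from $x$ and does not occur in $t$; then $(t/x)$ commutes with $\forall$ and with the inner instantiating substitution, and using stability of $\equiv$ once more the congruence side conditions are carried through. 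The remaining cases again just reassemble the rule around the induction hypotheses.

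The main obstacle is not the inductive bookkeeping but the interaction between substitution and the congruence in the base (\textsc{axiom}) cases: item (1) hinges on being able to move a type across $\equiv$ (which I isolate as the closure observation and reduce to inspection of the rules), whereas items (2) and (3) hinge on $\equiv$ being compatible with first-order substitution. The secondary point requiring care is the capture-avoidance discipline in the $\forall$-rules, handled by the usual freshness conventions on the bound variables.
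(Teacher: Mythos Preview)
Your proposal is correct and follows exactly the same approach as the paper: item~(1) by induction on the derivation using Weakening in the \textsc{axiom} case, and items~(2)--(3) by a parallel induction on the derivation. The paper's own proof is a one-line sketch of precisely this strategy; your version simply spells out the case analysis and makes explicit the two auxiliary facts (closure of derivable types under~$\equiv$, and stability of~$\equiv$ under first-order substitution) that the paper leaves implicit.
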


\begin{proof}
  Item~1 is proved by induction on the derivation of
  $\Gamma_1,\alpha:A,\Gamma_2\vdash\pi:B$ using
  Lemma~\ref{l:Weakening} in the case of the \textsc{axiom} rule.
  Items~2 and~3 are proved by induction on the derivation of
  $\Gamma\vdash\pi:A$.
\end{proof}


\bigskip


\subsection{$\beta$-reduction and strong normalization}
\label{normdef}\hfill\\

\noindent This section is devoted to the definitions of $\beta$-reduction and strong normalization for both systems.
$\beta$-reduction is a computation rule on proof-terms that simulates the elimination of cuts. And strong normalization is the property that ensures that all sequences of $\beta$-reductions from a proof are finite, leading to the property of cut elimination.

\begin{definition}[$\beta$-reduction]\hfill\\
  The relation of
  $\beta$-reduction is defined as the contextual closure of the
  following rules:
  \begin{iteMize}{$\bullet$}
  	\item $(\lambda\alpha.\pi)\pi' \ra (\pi'/\alpha)\pi~~$ and
    		$~~(\lambda x.\pi)t \ra (t/x)\pi~~$ in Church-style 
	\item $(\lambda\alpha.\pi)\pi' \ra (\pi'/\alpha)\pi~~$ in Curry-style.
\end{iteMize}
\end{definition}

\noindent We can see from this definition that $\beta$-reduction models both
$\Ra$-cuts and $\forall$-cuts in the system \`a la Church, while
it only models $\Ra$-cuts in the system \`a la Curry.
However, the strong
normalization of a proof-term corresponding to a purely logical
derivation of a sequent $\Gamma\vdash A$ does not depend \cite{CouMiq} on whether we
construct this proof-term \`a la Church or \`a la Curry when considering 
non-confusing theories (see section \ref{conf} for the definition).

\noindent As usual, we write $\pi\ra^*\pi'$ (resp.\ $\pi\ra^+\pi'$) if $\pi$
reduces to $\pi'$ in zero or more steps of $\beta$-reduction
(resp.\ in one or more steps of $\beta$-reduction).

\bigskip

\noindent Both systems defined above satisfy the
{\em subject reduction property} w.r.t.\ the corresponding notion of
$\beta$-reduction ({\em i.e.} $\beta$-reducing a term does not break well-typedness).

\begin{lemma}[Subject reduction]\label{p:sr}
  ~\\In both systems,
  if $\Gamma\vdash\pi:A$ and $\pi\ra\pi'$,
  then $\Gamma\vdash\pi':A$.
\end{lemma}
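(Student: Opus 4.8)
The plan is to proceed by induction on the derivation of $\Gamma\vdash\pi:A$, with a case analysis on the shape of the step $\pi\ra\pi'$; the two clauses of Lemma~\ref{l:Substitutivity} are exactly the tools needed for the two kinds of cut. Before starting I would record one easy preliminary: typing is closed under congruence of its conclusion, that is, $\Gamma\vdash\pi:A$ and $A\equiv A'$ imply $\Gamma\vdash\pi:A'$. This holds because every rule fixes its conclusion type only up to $\equiv$ through its side condition, so one simply replaces $A$ by $A'$ in the last rule applied. This observation lets me move a type freely inside its $\equiv$-class when recombining sub-derivations.

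The contextual cases, where the redex lies strictly inside $\pi$, are routine. For \textsc{axiom} there is nothing to do, as $\alpha$ is normal. For $\Ra\textsc{-intro}$, $\Ra\textsc{-elim}$ and the Church-style $\forall\textsc{-intro}$/$\forall\textsc{-elim}$ rules, a reduction confined to a subterm is dispatched by applying the induction hypothesis to the premise carrying that subterm and re-applying the same rule with its side conditions unchanged. The Curry-style $\forall$ rules leave the proof-term untouched, so a step $\pi\ra\pi'$ seen at the conclusion is literally a step of the premise's term; I apply the hypothesis to the premise and re-apply the $\forall$ rule to recover $\Gamma\vdash\pi':A$. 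Thus these $\forall$ nodes merely transmit reductions happening below them, and the genuine work is concentrated in the two head-redex cases.

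For the $\Ra$-cut $(\lambda\alpha.\sigma)\rho\ra(\rho/\alpha)\sigma$, common to both systems, the exposing node is an $\Ra\textsc{-elim}$ giving $\Gamma_1\vdash\lambda\alpha.\sigma:C$, $\Gamma_2\vdash\rho:A_0$ and $C\equiv A_0\Ra A$; I then need a \emph{generation} step for the abstraction that extracts a premise $\Gamma_1,\alpha:A_1\vdash\sigma:B_1$ with $A_1\Ra B_1$ related to $C$. In Church style this is immediate, since $\Ra\textsc{-intro}$ is the only rule that can build a proof-abstraction. In Curry style it is the delicate point, because the term-preserving $\forall$ rules may sit above that $\Ra\textsc{-intro}$, so the generation lemma must first peel them off, tracking how they reshape the type by universal closure and instantiation, before the $\Ra\textsc{-intro}$ premise is reached. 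Once the two premises are in hand I align $A_0$ with $A_1$ and $A$ with $B_1$ through the congruence $A_0\Ra A\equiv A_1\Ra B_1$, use Lemma~\ref{l:Weakening} to present both premises over the common context $\Gamma_1\Gamma_2$, apply Lemma~\ref{l:Substitutivity}(1), and close with the conversion fact to land on $A$. The Church-only $\forall$-cut $(\lambda x.\sigma)t\ra(t/x)\sigma$, exposed at a $\forall\textsc{-elim}$ node, is treated in the same shape, using generation for $\forall\textsc{-intro}$ and Lemma~\ref{l:Substitutivity}(2). I expect the main obstacle to be exactly this alignment of types across the congruence: it rests on $\equiv$ reflecting the structure of implications, and in Curry style on controlling the interleaved $\forall$ steps inside the generation lemma.
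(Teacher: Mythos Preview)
Your approach---induction on the typing derivation together with Lemma~\ref{l:Substitutivity}---is exactly the paper's; the paper gives only the one-line sketch ``By induction on the derivation of $\Gamma\vdash\pi:A$ using Lemma~\ref{l:Substitutivity}'', and what you have written is a faithful unfolding of that sketch into its contextual and head-redex cases.

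The one point worth flagging is the obstacle you yourself identify at the end: to close the $\Ra$-cut case you need that $A_0\Ra A\equiv A_1\Ra B_1$ yields $A_0\equiv A_1$ and $A\equiv B_1$ (and analogously for $\forall$ in Church style). The paper's definition of congruence only requires the forward direction, so this ``reflection'' of $\equiv$ through the top connective is \emph{not} guaranteed for an arbitrary congruence, and the paper's one-line proof glosses over exactly the same point. In other words, your plan is no more and no less complete than the paper's own argument: both tacitly rely on a component-wise behaviour of $\equiv$ on implications (and on $\forall$ in the Church case) that is standard for congruences generated by rewrite systems but is not derivable from the abstract hypotheses stated. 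You have correctly located the only genuine difficulty.
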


\proof
  By induction on the derivation of $\Gamma\vdash\pi:A$
  using Lemma~\ref{l:Substitutivity}.
 \qed

\medskip

\noindent We now formally define, for both systems \`a la Curry and \`a la Church, the {\em strong normalization} property: the fact that $\beta$-reduction terminates.

\begin{definition}
Given a proof-term $\pi$,
\begin{iteMize}{$-$}

\item[--] a {\em finite reduction sequence} starting from~$\pi$ is
  any finite sequence $(\pi_i)_{0\leq i \leq n}$ ($n\ge 0$) such that $\pi_0=\pi$
  and $\pi_i\ra\pi_{i+1}$ for all $0\le i<n$.
  The natural number~$n\ge 0$ is then called the {\em length} of the
  sequence $(\pi_i)_{0 \leq i \leq n}$;
\smallskip

\item[--] an {\em infinite reduction sequence} starting from~$\pi$ is any
  infinite sequence $(\pi_i)_{i\in\mathbb{N}}$ such that $\pi_0=\pi$
  and $\pi_i\ra\pi_{i+1}$ for all $i\in\mathbb{N}$.
\end{iteMize}
\end{definition}

\smallskip

\noindent The set of all finite reduction sequences starting from a given
proof-term~$\pi$ naturally forms a tree, which is called the
{\em reduction tree} of the term~$\pi$.
This reduction tree may be finite or infinite, but it is always
finitely branching since each proof-term has only
a finite numbers of $1$-reducts.

\begin{definition}[Normal forms]\hfill\\
  We say that a proof-term~$\pi$ is in {\em normal form} if there is no
  proof-term~$\pi'$ such that $\pi\ra\pi'$.
\end{definition}

\noindent Equivalently, a proof-term~$\pi$ is in normal form if and only if its
reduction tree is reduced to a singleton.

\begin{definition}[Strongly normalizing proof-terms]\hfill\\
  We say that a proof-term~$\pi$ is {\em strongly
    normalizing} if one of the following
  equivalent conditions holds:
  \begin{enumerate}[(1)]
  \item The reduction tree of~$\pi$ is finite.
  \item There is no infinite reduction sequence
    starting from~$\pi$.
  \end{enumerate}
  The set of all strongly normalizing terms is
  written $SN$.
\end{definition}

\medskip

\noindent  And we say that a theory in minimal deduction modulo is strongly normalizing if all its (well-typed) proof-terms are strongly normalizing.

\smallskip

\begin{definition}[Strongly normalizing theories]~\\
  We say that the theory is
  {\em strongly normalizing} if for all
  contexts~$\Gamma$, for all propositions $A$ and for all proof-terms
  $\pi$, $\Gamma\vdash\pi:A$ entails that~$\pi$ is
  strongly normalizing.
\end{definition}

\smallskip

\noindent Minimal deduction modulo allows to express both strongly normalizing and not strongly normalizing theories. For example, the theory defined by an empty signature and the trivial congruence relation (propositions are only congruent to themselves) is strongly normalizing since well-typed proof-terms of usual natural deduction do terminate. On the other side, whenever two propositions of the form $A$ and $A \Ra B$ are congruent in a theory, this theory is not strongly normalizing. Indeed, in that case, we have:
\vspace{1ex}
$$\irule{\alpha:A \vdash \alpha : A  \hspace{3ex} \alpha:A \vdash \alpha : A\Ra B}
{  \irule{\alpha:A\vdash \alpha\alpha : B}
            {\alpha:A\vdash \lambda \alpha. \alpha\alpha :  A \Ra B}
            {\hspace{2ex}\Ra\textrm{-intro}}
}      
{\hspace{2ex}\Ra\textrm{-elim}}$$

\vspace{.5ex}
\noindent And we also have $\alpha:A\vdash \lambda \alpha. \alpha\alpha :  A$ since $A \equiv A \Ra B$.
Hence $\alpha:A \vdash (\lambda \alpha. \alpha\alpha)(\lambda \alpha. \alpha\alpha) : B$, using the $\Ra$-elimination rule. But the proof-term $(\lambda \alpha. \alpha\alpha)(\lambda \alpha. \alpha\alpha)$ is not strongly normalizing since it reduces to itself in one step of $\beta$-reduction.



\subsection{On confusing and non-confusing theories}
\label{conf}\hfill\\
Now that we have defined how to express theories in minimal deduction 
modulo \`a la Curry and \`a la Church, let us focus on the particularly interesting
property of confusion, concerning the congruence relation defining some theory.

\begin{definition}[Confusion]
\label{confusion}\hfill\\
A congruence relation~$\equiv$ is said to be {\em confusing}, if it identifies two
non-atomic propositions starting with a different top-level connective or
quantifier. In minimal deduction modulo, this means that there exists propositions
$A,B,C$ such that $A \Ra B \equiv \forall x.C$.
\end{definition}

\noindent An example of rewrite rule leading to a confusing theory is $\forall x.(A \Ra B) \lra A \Ra \forall x.B$.
This sort of rewrite rule exhibits a major difference between expressing theories
in minimal deduction modulo \`a la Church and \`a la Curry.
Indeed, given a theory, adding that rewrite rule to the rewrite system defining $\equiv$,
 does change provability in Church-system whereas it does not in Curry-system (if $x$ is not free in $A$).
 
 \begin{iteMize}{$\bullet$}
\item In Church system, this rule is not admissible: if the rewrite system is empty, $\lambda \alpha.\pi$ (with $\pi$ a proof of $\forall x.B$ when $\alpha$ is a proof of $A$) is a proof of $A \Ra \forall x.B$ and it cannot be a proof of of a universally quantified proposition, in particular $\forall x.(A \Ra B)$. 
\smallskip
\item But in Curry system, all proofs of  $A \Ra \forall x.B$ are proofs of $A \Ra B$ (since all proofs of $\forall x.B$ are proofs of $(x/x)B$), and, by the same reasoning, they are also proofs of $\forall x.(A \Ra B)$.
 \end{iteMize}
 \bigskip
 
 \noindent With Alexandre Miquel, we explored, in \cite{CouMiq}, the relation between 
 strong normalization of a theory, when expressed \`a la Curry or \`a la Church. We proved that
 in the case of a non-confusing theory, strong normalization is equivalent in systems \`a la Church and \`a la
 Curry. And we also proved that in all cases strong normalization \`a la Curry entails strong normalization \`a la
 Church for a given theory. 
  
  \medskip
  
  \begin{theorem}[Church and Curry strong normalization \cite{CouMiq}]\hfill\\
  \label{ccsn}
  Let us consider a theory in minimal deduction modulo,
  \begin{iteMize}{$\bullet$}
  	\item if it is non-confusing then it is strongly normalizing in Curry style if and only if it is strongly normalizing in Church style.
  	\item in all cases, if it is strongly normalizing in Curry style then so is it in Church style.
  \end{iteMize}
  \end{theorem}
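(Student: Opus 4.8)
The plan is to relate the two systems through the \emph{erasure} map $|\cdot|$ that sends a Church-style proof-term to the Curry-style one obtained by deleting all first-order information: $|\alpha|=\alpha$, $|\lambda\alpha.\pi|=\lambda\alpha.|\pi|$, $|\pi\pi'|=|\pi|\,|\pi'|$, $|\lambda x.\pi|=|\pi|$ and $|\pi\,t|=|\pi|$. First I would record three routine facts. (i) Erasure preserves typing: by induction on the Church derivation, $\Gamma\vdash\pi:A$ in Church style implies $\Gamma\vdash|\pi|:A$ in Curry style, the two $\forall$-rules of Fig.~\ref{f:Typing} collapsing onto the term-preserving Curry $\forall$-rules. (ii) It commutes with proof-substitution, $|(\pi'/\alpha)\pi|=(|\pi'|/\alpha)|\pi|$, and is insensitive to first-order substitution, $|(t/x)\pi|=|\pi|$. (iii) Hence it \emph{simulates} reduction weakly: an $\Ra$-cut $(\lambda\alpha.\pi)\pi'\ra(\pi'/\alpha)\pi$ maps to a genuine Curry step $|(\lambda\alpha.\pi)\pi'|\ra|(\pi'/\alpha)\pi|$, whereas a $\forall$-cut $(\lambda x.\pi)t\ra(t/x)\pi$ leaves the erasure unchanged, $|(\lambda x.\pi)t|=|(t/x)\pi|$. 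Separately I would prove that \emph{$\forall$-cuts alone terminate}: the number of $\lambda x$-abstractions strictly decreases at each $\forall$-cut, since $t$ is first-order and contains no such abstraction; so no reduction sequence can consist of $\forall$-cuts only.

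These ingredients already give the direction that holds \emph{in all cases}. Suppose the theory is strongly normalizing in Curry style and let $\Gamma\vdash\pi:A$ in Church style; by (i), $|\pi|\in SN$. If $\pi$ admitted an infinite Church reduction sequence, then by the termination of $\forall$-cuts this sequence would contain infinitely many $\Ra$-cuts, and applying $|\cdot|$ together with (iii) would collapse it into an infinite Curry reduction from $|\pi|$, contradicting $|\pi|\in SN$. Hence $\pi\in SN$.

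For the converse, restricted to non-confusing theories, I would argue in the opposite direction. A \emph{lifting} lemma comes first: any Curry derivation of $\Gamma\vdash\rho:A$ can be turned, by reading off its $\forall$-rules and inserting the matching $\lambda x$ and $\cdot\,t$, into a Church derivation $\Gamma\vdash\pi:A$ with $|\pi|=\rho$; this step needs no hypothesis on the congruence. The crux is a \emph{reverse simulation}: whenever $\Gamma\vdash\pi:A$ in Church style and $|\pi|\ra\rho'$ in Curry style, there is $\pi'$ with $\pi\ra^+\pi'$ and $|\pi'|=\rho'$. The only Curry redex is an $\Ra$-cut $(\lambda\alpha.\rho_1)\rho_2$, so the matching occurrence in $\pi$ is an $\Ra$-elimination $\pi_h\,\pi_2$ with $|\pi_h|=\lambda\alpha.\rho_1$ and $|\pi_2|=\rho_2$, whence $\pi_h$ has type $\equiv A\Ra B$. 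I would then use the termination of $\forall$-cuts to reduce $\pi_h$ to a $\forall$-normal head and analyse its shape: it can be neither a proof-application nor a variable (their erasures are not abstractions), nor (by non-confusion, which forbids $A\Ra B\equiv\forall x.C$, Definition~\ref{confusion}) a $\lambda x$-abstraction; so it must be a proof-abstraction $\lambda\alpha.\pi_1$. Firing the now-exposed Church $\Ra$-cut and invoking (ii) yields $\pi\ra^+\pi'$ with $|\pi'|=\rho'$. Chaining this turns an infinite Curry reduction from $\rho=|\pi|$ into an infinite Church reduction from $\pi$, so Church strong normalization of the lifted $\pi$ forces Curry strong normalization of $\rho$; with the lifting lemma this gives the equivalence.

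The main obstacle is precisely this reverse simulation, and within it the step that \emph{exposes} a proof-abstraction at the head of $\pi_h$. Without non-confusion the head could stabilise, after all $\forall$-cuts, as a stuck application $(\lambda x.\pi')\pi_2$ of a first-order abstraction to a proof-term, which is a Church \emph{normal} form whose erasure $(\lambda\alpha.\rho_1)\rho_2$ is nevertheless a Curry redex. This is exactly the configuration enabled by a confusion $A\Ra B\equiv\forall x.C$, and it is what lets a confusing theory be strongly normalizing in Church style while diverging in Curry style; ruling it out is the heart of the non-confusing case.
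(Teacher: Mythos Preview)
The paper does not give a proof of this theorem: it is quoted from \cite{CouMiq} and stated here without argument, so there is no ``paper's own proof'' to compare against. What can be assessed is whether your sketch is a viable proof, and it is: the erasure/simulation strategy you outline is the standard one and is essentially how the cited result is obtained.

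Two remarks on the details.

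\medskip
\noindent\textbf{The $\forall$-cut termination measure.} Your observation that the total number of term-abstractions $\lambda x.\,$ strictly decreases at each $\forall$-cut is correct (first-order terms carry no proof-term syntax, so $(t/x)\pi$ has exactly as many term-abstractions as $\pi$), and this is enough: an infinite Church sequence with only finitely many $\Ra$-cuts would have an infinite tail of pure $\forall$-cuts, which is impossible. Together with (iii) this gives the ``in all cases'' direction exactly as you say.

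\medskip
\noindent\textbf{A missing case in the head analysis.} In the reverse simulation you reduce $\pi_h$ to a ``$\forall$-normal head'' and then argue by cases that the result must be a proof-abstraction $\lambda\alpha.\pi_1$. Your case split lists variables, proof-applications, and $\lambda x$-abstractions, but omits the shape $\pi'\,t$ (term application), which is syntactically possible and whose erasure is still $\lambda\alpha.\rho_1$. This case is not hard to close, but it does require one more pass of the same reasoning: by inversion $\pi'$ has a type congruent to some $\forall x.D$, and again $|\pi'|=\lambda\alpha.\rho_1$; peeling off term applications one reaches a head that is neither a variable nor a proof-application (wrong erasure), not $\lambda\alpha.(\cdot)$ (its type would be $\equiv A'\Ra B'$, hence $A'\Ra B'\equiv\forall x.D$, i.e.\ confusion), and if it were $\lambda x'.(\cdot)$ then the innermost $(\lambda x'.\cdot)\,t_1$ would be a $\forall$-redex, contradicting $\forall$-normality. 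So the term-application case is impossible after full $\forall$-normalisation, and the only remaining shape is $\lambda\alpha.\pi_1$, as you need. With this case handled explicitly, your argument goes through.

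\medskip
In short: your proposal is correct in outline and matches the intended approach of the cited reference; the only gap is the unlisted $\pi'\,t$ case in the head analysis of the reverse simulation, which is closed by the same non-confusion argument iterated along the spine of term applications.
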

  
  \medskip
  
\noindent  Notice that the author conjectures that for confusing theories, strong normalization is also equivalent in systems \`a la Church and \`a la Curry. See the discussion in the conclusion.
  
%
%
%
%
%

\bigskip
\bigskip
\section{Reducibility candidates and pre-Heyting algebras}
\label{rcpha}

\noindent As explained before, strong normalization is an essential property for logic systems. Girard extended Tait's convertibility method to define reducibility candidates, which provides a general method to prove strong normalization.

\subsection{Concept of reducibility candidates}
\label{concept}\hfill\\
This section is devoted to introduce the concept of reducibility candidates, by explaining the key point of such proofs of strong normalization, in order to understand how reducibility candidates can be modified to get completeness while keeping soundness (for strong normalization).

\medskip

\noindent The main idea of reducibility candidates is to associate to each proposition $A$, a set of proof-terms called ${\cal R}_A$ containing only strongly normalizing proof-terms and then prove the  {\em adequacy lemma}, which entails strong normalization of the considered logical framework. 

\medskip

\noindent The {\bf {\em adequacy lemma}} states that if $\pi$ is a proof of $A$ (in a context $\Gamma$) then it belongs to ${\cal R}_A$ and is therefore strongly normalizing. 

\medskip

\noindent The proof of the adequacy lemma is done by induction on the length of the typing derivation $\Gamma \vdash \pi : A$, by case on the last rule used in this derivation. Therefore reducibility candidates have to be modeled on typing rules as we see in the following. Let us describe how to define those reducibility candidates in the case of the simply-typed $\lambda$-calculus (i.e. minimal predicate logic without universal quantification) in order to prove this adequacy lemma. We reason by case on the last rule used in the typing derivation $\Gamma \vdash \pi : A$ which can be one of the three following typing rules:
				
$$\irule{\alpha:A \in \Gamma} 
        {\Gamma \vdash \alpha:A}
        {\mbox{\small axiom}}\hspace{2cm}
        \irule{\Gamma, \alpha:A \vdash \pi: B} 
        {\Gamma \vdash \lambda \alpha.\pi:A \Ra B} 
        {\mbox{\small $\Ra$-intro}}\hspace{3cm} 
\irule{\Gamma \vdash \pi:A\Ra B\hspace{5mm}\Gamma \vdash \pi':A} 
        {\Gamma \vdash (\pi~\pi'):B} 
        {\mbox{\small $\Ra$-elim}}\hspace{2cm}$$ 
	
\vspace{2mm}
		\noindent To handle the $\Ra$-intro case, we actually need a more precise formulation of this lemma as: if $\Gamma \vdash \pi : A$ then if $\sigma$ is a  substitution such that for all variables $\alpha$  declared proof of $B$ in $\Gamma$, $\sigma \alpha \in {\cal R}_B$, then $\sigma \pi \in {\cal R}_A$  (notice that in this case, if such a substitution $\sigma$ exists, then $\pi \in SN$ since $\sigma \pi \in SN$). For convenience, we call {\em adequate} such substitutions in the following.
		
\medskip

\noindent Let us describe the scheme of the proof of this adequacy lemma, in order to bring up the different properties of reducibility candidates. Remind that we reason by case on the last rule used in the typing derivation $\Gamma \vdash \pi : A$ 

\medskip

\begin{iteMize}{$\bullet$}

 \item If the last rule used is \textit{axiom}, we conclude by hypothesis \mbox{on $\sigma$. }

\medskip

 \item  If the last rule used is \textit{$\Ra$-elim}, then  $\pi$ is of the form $\pi_1\pi_2$ and there exists a proposition $B$ such that $\Gamma \vdash \pi_1: B \Ra A$ and $\Gamma \vdash \pi_2 : B$ (both with shorter derivations). Let $\sigma$ be an adequate substitution (for $\Gamma$), we know by induction hypothesis that $\sigma \pi_1 \in {\cal R}_{B\Ra A}$ and  $\sigma \pi_2 \in {\cal R}_{B}$. To conclude, it is therefore convenient to make another assumption about those reducibility candidates: the fact that for all propositions $C$ and $D$, the set ${\cal R}_{C\Ra D}$ contains exactly all proof-terms which lead all elements of ${\cal R}_{C}$ to elements of ${\cal R}_{D}$. 

\medskip

\item The last case is a little bit trickier: if $\pi = \lambda \alpha.\mu$ and $A = B \Ra C$, then we know by induction hypothesis that for all $\nu \in {\cal R}_{B}$, $\sigma (\nu/\alpha)\mu \in {\cal R}_{C}$ (since $\sigma (\nu/\alpha)$ is adequate). But in order to conclude, we need to prove that its $\beta$-expansion, the neutral proof-term $\sigma((\lambda \alpha.\mu)~\nu)$, is also in ${\cal R}_{C}$. This leads to make another assumption on the sets ${\cal R}_D$, for all propositions $D$: we suppose that if a proof-term is neutral and all its $\beta$-reducts are in ${\cal R}_D$ then it is also in ${\cal R}_D$. Notice that we make an assumption on {\em all} $\beta$-reducts of a neutral term, not on only one $\beta$-reduct, since we want this property to be compatible with strong normalization. In our case, since $\beta$-reductions can also appear in $\lambda \alpha.\mu$ and in $\nu$, we can conclude by making a last assumption on the sets ${\cal R}_D$, for all propositions $D$: they are stable by $\beta$-reduction.
\end{iteMize}
		
\medskip
		
	\noindent	In sum, we define as reducibility candidates for the simply typed $\lambda$-calculus, the functions ${\cal R}_.$  from propositions to sets of proof-terms such that: 
	
	\medskip
	
	\noindent -- For all propositions $A$,  ${\cal R}_A$ satisfies the so-called (CR$_1$), (CR$_2$) and (CR$_3$) properties:

\medskip
		 (CR$_1$)~ ${\cal R}_A \subseteq SN$

		  (CR$_2$)~ if $\pi \in {\cal R}_A$ and $\pi \ra \pi'$ then $\pi' \in {\cal R}_A$

		  (CR$_3$)~ if $\pi$ is neutral and ${\cal R}_A$ contains all its one-step $\beta$-reducts, 
			then $\pi$  belongs to ${\cal R}_A$
		
\bigskip
		
		\noindent -- For all propositions $A$ and $B$, 
		\smallskip

		${\cal R}_{A \Ra B} = \{ \pi$ such that for all $\mu \in {\cal R}_A,~\pi\mu\in {\cal R}_B\}$.
		
		\bigskip
		\noindent Since we are able to define such a set of reducibility candidates for all propositions (by associating the set $SN$ to atomic propositions and using the property above as an inductive definition for the other propositions), we can conclude, via the adequacy lemma, that the simply-typed $\lambda$-calculus is strongly normalizing.
		
		\medskip
		
		\noindent Notice finally that reducibility candidates cannot be empty because of the (CR$_3$) property (non-emptiness is needed to build an adequate substitution for the contexts considered in the adequacy lemma). All normal neutral proof-terms, such as variables, have no $\beta$-reduct and are therefore in all reducibility candidates. 


\medskip

\subsection{Soundness for strong normalization}\hfill\\
To understand why this notion of reducibility candidates can be seen as a sound criterion for strong normalization, let us consider a very simple logical framework: the simply-typed $\lambda$-calculus modulo.
Applying the concept of deduction modulo, the simply typed $\lambda$-calculus can be extended by considering a congruence $\equiv$  on propositions, and authorizing to identify $\equiv$-equivalent propositions in typing derivations. This leads to consider the following adapted typing rules: 

\vspace{0mm}	
		
$$\irule{\alpha:A \in \Gamma} 
        {\Gamma \vdash \alpha:B}
        {\mbox{\small ~$A\equiv B$}}\hspace{2cm}
        \irule{\Gamma, \alpha:A \vdash \pi: B} 
        {\Gamma \vdash \lambda \alpha.\pi:C} 
        {\mbox{\small ~$C\equiv A \Ra B$}}\hspace{3.5cm} 
\irule{\Gamma \vdash \pi:C\hspace{5mm}\Gamma \vdash \pi':A} 
        {\Gamma \vdash (\pi~\pi'):B} 
        {\mbox{\small ~$C\equiv A \Ra B$}}\hspace{2cm}$$

\vspace{2mm}
		\noindent
In this logical framework, we can express strongly normalizing and non-strongly normalizing theories. For example, if we consider an atomic proposition $A$, the congruence generated by the rewrite rule $A \ra A$ expresses a strongly normalizing theory (remind that this rewrite rule concerns propositions and not proof-terms).  Whereas the congruence generated by the rewrite rule $A \ra A \Ra A$ expresses a non strongly normalizing theory (since in this case, the non-normalizing proof-term $(\lambda \alpha.\alpha\alpha)(\lambda \alpha.\alpha\alpha)$ is a proof of $A$). 

\medskip

\noindent In order to continue to be modeled on typing, since $\equiv$-congruent propositions are identified in typing rules, reducibility candidates for this logical framework have to satisfy another property: the fact that if $A$ and $B$ are two $\equiv$-congruent propositions then ${\cal R}_A = {\cal R}_B$. This is the idea of {\em pre-models} \cite{DW} which are the extension of the notion of reducibility candidates to deduction modulo. Provided this additional property, the proof of the adequacy lemma can be directly transposed to the simply typed $\lambda$-calculus modulo. 
Given a theory, the ability to build such a set of reducibility candidates (pre-model) via the method presented in the section \ref{concept} is lost in general because of this last property. And the existence of a pre-model provides a sound criterion for strong normalization of theories expressed in this logical framework.

%
%
%
%
%
%
\subsection{Semantic definition of reducibility candidates \`a la Church}\hfill\\

\noindent In \cite{DOW-TVA}, Gilles Dowek gave a semantic definition of this notion of pre-models, by defining
the notion of pre-Heyting algebra (also kwown as truth values algebra), on which pre-models can be defined as models.
We only define here the restriction of pre-Heyting algebras to minimal deduction modulo \`a la Church. 
 
\begin{definition}[pre-Heyting algebra]\hfill\\
Let $\calB$ be a set, $\leq$ be a relation on $\calB$, 
$\calA$ be a subset of $\wp(\calB)$,
 $\tildeimp$ be a function from $\calB \times \calB$ to
$\calB$ and $\tildefa$ be a function from $\calA$ to $\calB$,
the structure 
$\calB = \langle \calB, {\leq}, \calA,
\tildeimp, \tildefa\rangle$ is said to
be a {\em pre-Heyting algebra} if 

\begin{iteMize}{$\bullet$}
\item the relation $\leq$ is a pre-order,
\item for all $a \in \calB$ and $A \in \calA$, $a~\tildeimp~A$ is in $\calA$,
\item $\tildefa$ is an infinite greatest lower bound for $\leq$,
\end{iteMize}

\noindent (For $A \in {\cal A}$ and $a \in {\cal B}$, we write $a~\tildeimp~A$ for the set $\{a~\tildeimp~b$, for $b \in A\}$.) \vspace{2mm}
\end{definition}

\medskip
\noindent Let us now define the notion of model valued on a pre-Heyting algebra.

\begin{definition}[${\cal B}$-valued structure]\hfill\\
Let ${\cal L} = \langle f_i, P_j \rangle$ be a first-order signature
 and ${\cal B}$ be a pre-Heyting algebra, a {\em ${\cal
    B}$-valued structure} $ \calM  =
\langle  M , {\cal B}, \hat{f}_i, \hat{P}_j \rangle$  for the first order signature ${\cal L}$, is a structure
such that each $\hat{f_i}$ is a function from $ M ^n$ to $ M $ where $n$ is
the arity of the function symbol $f_i$ and each $\hat{P_j}$ is a function from $ M
^n$ to ${\cal B}$ where $n$ is the arity of the predicate symbol $P_j$. (We may call $M$ the {\em term-model} in the following.)
\end{definition}

\medskip

\begin{definition}[Environment]\hfill\\
  Given a set ${\calB}$-valued structure $ \calM  =
\langle  M , {\cal B}, \hat{f}_i, \hat{P}_j \rangle$ , an {\em environment} is a function which leads each term-variable to an element of $M$.
\end{definition}

\medskip

%

\begin{definition}[Interpretation] \label{def:interpretation}\hfill\\
Let ${\cal B}$ be a pre-Heyting algebra, $\calM$ be a ${\cal
  B}$-valued structure and $\phi$ be an environment. The interpretations 
  $\llbracket t \rrbracket_\phi^\calM$ of a term $t$ in $\calM$  and 
$\llbracket A \rrbracket_\phi^\calM$ of a proposition $A$ in $\calM$ are defined
as follows

\begin{iteMize}{$\bullet$}
\item $\llbracket x \rrbracket_{\phi}^\calM = \phi(x)$, 
\item $\llbracket f(t_1, ..., t_n) \rrbracket_{\phi}^\calM = 
\hat{f}(\llbracket t_1\rrbracket_{\phi}^\calM, ..., \llbracket
t_n\rrbracket_{\phi}^\calM)$,
\item $\llbracket P(t_1, ..., t_n) \rrbracket_{\phi}^\calM = 
\hat{P}(\llbracket t_1\rrbracket_{\phi}^\calM, ..., \llbracket
t_n\rrbracket_{\phi}^\calM)$,
\item $\llbracket A \Rightarrow B \rrbracket_{\phi}^\calM = 
\llbracket A \rrbracket_{\phi}^\calM 
~\tildeimp~
\llbracket B \rrbracket_{\phi}^\calM$, 
\item $\llbracket \fa x~A \rrbracket_{\phi}^\calM = 
\tildefa~\{
\llbracket A \rrbracket_{\phi + \langle x, e\rangle}~|~e \in  M \}$
when it is defined.
\end{iteMize}
\end{definition}

\medskip

\begin{remark} We omit $\calM$ from
  $\llbracket A \rrbracket_\phi^\calM$ when it is clear from
  context. 
  \\In all the pre-Heyting Algebras we consider in this paper,
  $\calA$ at least contains all the sets of the form $\{
  \llbracket A \rrbracket_{\phi + \langle x, e\rangle}~|~e \in  M \}$
  so that $\llbracket \forall x.A \rrbracket_\phi$ is always defined.
\end{remark}

\medskip

\noindent The following lemma comes for free with the previous definition that builds the interpretation of a proposition, inductively from the first order signature, given a $\calB$-valued structure. 

\begin{lemma} \label{lsub} For all propositions $A$ 
 and environments $\phi$,
$~~
\llbracket (t/x)A \rrbracket_{\phi} = 
\llbracket A \rrbracket_{\phi + \langle x, \llbracket t \rrbracket_\phi \rangle}
$. 
\end{lemma}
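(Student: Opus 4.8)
The plan is to prove the statement by structural induction on the proposition $A$, after first establishing the analogous identity at the level of terms, namely $\llbracket (t/x)u \rrbracket_{\phi} = \llbracket u \rrbracket_{\phi + \langle x, \llbracket t \rrbracket_\phi \rangle}$ for every term $u$. This auxiliary term-level claim is itself proved by induction on the structure of $u$: when $u$ is the variable $x$, both sides reduce to $\llbracket t \rrbracket_\phi$ (using that $(\phi + \langle x, \llbracket t \rrbracket_\phi \rangle)(x) = \llbracket t \rrbracket_\phi$); when $u$ is a variable $y \neq x$, both sides equal $\phi(y)$, since the update at $x$ does not affect $y$; and when $u = f(u_1,\dots,u_n)$, the claim follows by applying $\hat f$ to the $n$ instances of the induction hypothesis, using the defining clause for $\llbracket f(\dots)\rrbracket_\phi$.

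The induction on propositions then has three cases mirroring Definition~\ref{def:interpretation}. For an atomic proposition $P(u_1,\dots,u_n)$, the identity is immediate from the term-level claim applied componentwise together with the defining clause $\llbracket P(\dots)\rrbracket_\phi = \hat P(\dots)$. For an implication $B \Ra C$, the substitution commutes with the connective, $(t/x)(B \Ra C) = (t/x)B \Ra (t/x)C$, so the two induction hypotheses together with the defining clause $\llbracket B \Ra C\rrbracket_\phi = \llbracket B\rrbracket_\phi ~\tildeimp~ \llbracket C\rrbracket_\phi$ give the result directly.

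The universal case $A = \forall y.B$ is where the real work lies, and I expect it to be the main obstacle. Working modulo $\alpha$-conversion, I would first rename the bound variable so that $y \neq x$ and $y \notin FV(t)$, which makes $(t/x)(\forall y.B) = \forall y.(t/x)B$ legitimate as a capture-avoiding substitution. Unfolding the defining clause for $\forall$, it then suffices to show, for every $e \in M$, that $\llbracket (t/x)B\rrbracket_{\phi + \langle y, e\rangle} = \llbracket B\rrbracket_{(\phi + \langle x, \llbracket t\rrbracket_\phi\rangle) + \langle y, e\rangle}$, since the two interpretations of $\forall y.\,(-)$ are then the image under $\tildefa$ of one and the same family indexed by $e$. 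Applying the induction hypothesis to $B$ in the environment $\phi + \langle y, e\rangle$ yields $\llbracket B\rrbracket_{(\phi + \langle y, e\rangle) + \langle x, \llbracket t\rrbracket_{\phi + \langle y, e\rangle}\rangle}$, and two elementary facts finish the argument: since $y \notin FV(t)$ we have $\llbracket t\rrbracket_{\phi + \langle y, e\rangle} = \llbracket t\rrbracket_\phi$ (an instance of the standard coincidence property that an interpretation depends only on the values the environment assigns to free variables, which I would record as a preliminary remark), and since $x \neq y$ the two independent updates commute, $(\phi + \langle y, e\rangle) + \langle x, \llbracket t\rrbracket_\phi\rangle = (\phi + \langle x, \llbracket t\rrbracket_\phi\rangle) + \langle y, e\rangle$. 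The delicate points are thus purely the variable-hygiene bookkeeping---securing $y \neq x$ and $y \notin FV(t)$ by renaming, and justifying that $\tildefa$ may be applied once because the two families are literally equal---rather than anything about the pre-Heyting structure itself.
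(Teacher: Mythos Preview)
Your proposal is correct and follows exactly the approach the paper indicates: the paper's own proof is simply ``By structural induction on $A$ and $u$,'' and you have spelled out precisely that double induction, including the variable-hygiene bookkeeping in the $\forall$ case that the paper leaves implicit.
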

\begin{proof} By structural induction on $A$ and $u$.
\end{proof}

\medskip

\noindent This lemma shows a fundamental property concerning interpretations of term-substituted propositions. It is important to notice that in section \ref{tc}, when we propose another way to define interpretations of propositions, we shall need to add this property directly in our definition of models, whereas it is not the case presently.

\bigskip

\noindent A model is a ${\cal B}$-valued structure such that the associated interpretation identifies congruent propositions of the considered theory (it its original statement a model also identifies congruent terms, but this is of no interest for the present paper).

\begin{definition}[Model] \label{def:model}\hfill\\
The ${\cal B}$-valued structure $\calM$ is said to be {\em a model of}
a theory $({\cal L}, \equiv)$ if for all propositions ~$A$ and $B$ such that $A \equiv B$, 
 for all environments $\phi$, 
$~\llbracket A \rrbracket_\phi = \llbracket B \rrbracket_\phi$.
\end{definition}

\medskip

\noindent Finally we present the definition of the pre-Heyting algebra of reducibility candidates and state that the existence of a model valued on this algebra is a sound semantic criterion for strong normalization of theories expressed in minimal deduction modulo (notice that the original theorem concerns whole deduction modulo \`a la Church \cite{DOW-TVA}).

\begin{definition}[The algebra of reducibility candidates]\label{acr}\hfill\\
The domain of the algebra is $\calC$ the set of reducibility candidates (i.e. the set of sets of proof-terms which satisfy (CR$_1$), (CR$_2$) and (CR$_3$)). 
\\ The $\calC$-valued structure $ \calM  = \langle  M , {\calC}, \hat{f}_i, \hat{P}_j \rangle$ is composed of $M$ the set of terms,
each $\hat{f}_i$ is the function symbol $f_i$ itself and each $\hat{P}_j$ is the constant function \mbox{leading all tuples of terms to $SN$.}
\\ The set $\calA$ is $\wp(\calB)$. 
\\ $\leq$ is inclusion.
\\ For all $a,b \in \calC$, $a \tildeimp b$ is defined as the set of proof-terms $\pi$ such that for all $\mu \in a$, $\pi\mu \in b$.
\\ For all $A \in \calA$, $\tfa A$ is the set of proof-terms $\pi$ such that for all terms $t$ and $a \in A$, $\pi~t \in a$.
\end{definition}


\medskip

\begin{theorem}[Soundness \cite{DW,DOW-TVA}]\hfill\\
If a theory in minimal deduction modulo has a $\calC$-valued model then it is strongly normalizing.
 In other words, the existence of a $\calC$-valued model is a sound criterion for strong normalization of a theory in minimal deduction modulo (\`a la Church).
\end{theorem}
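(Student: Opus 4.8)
The plan is to prove the \emph{adequacy lemma}: if $\calM$ is a $\calC$-valued model of the theory $({\cal L},\equiv)$, then for every typing derivation $\Gamma \vdash \pi : A$ and every environment $\phi$, any substitution $\sigma$ that is \emph{adequate} (i.e.\ $\sigma\alpha \in \llbracket B \rrbracket_\phi$ whenever $(\alpha:B) \in \Gamma$) satisfies $\sigma\pi \in \llbracket A \rrbracket_\phi$. Since reducibility candidates always contain the proof-variables (by (CR$_3$), as normal neutral terms have no reducts), the identity-like substitution sending each $\alpha$ to itself is adequate, so the lemma specializes to $\pi \in \llbracket A \rrbracket_\phi \subseteq SN$ by (CR$_1$), giving strong normalization. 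The model hypothesis is used exactly to ensure $\llbracket A \rrbracket_\phi = \llbracket B \rrbracket_\phi$ whenever $A \equiv B$, which is what lets the typing rules---all of which carry a side condition of the form $C \equiv A \Ra B$ or $B \equiv \forall x.A$---interact correctly with the interpretation.

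First I would establish the lemma by induction on the derivation of $\Gamma \vdash \pi : A$, proceeding by cases on the last rule. For the \textsc{axiom} rule deriving $\Gamma,\alpha:A \vdash \alpha : B$ with $A \equiv B$, I use adequacy to get $\sigma\alpha \in \llbracket A \rrbracket_\phi$ and then the model equation $\llbracket A \rrbracket_\phi = \llbracket B \rrbracket_\phi$ to conclude. For $\Ra$-\textsc{elim}, where $\Gamma \vdash \pi : C$, $\Gamma' \vdash \pi' : A$ with $C \equiv A \Ra B$, the induction hypothesis gives $\sigma\pi \in \llbracket C \rrbracket_\phi = \llbracket A \rrbracket_\phi \mathbin{\tildeimp} \llbracket B \rrbracket_\phi$ and $\sigma\pi' \in \llbracket A \rrbracket_\phi$; the definition of $\tildeimp$ in Definition~\ref{acr} then yields $\sigma\pi\,\sigma\pi' \in \llbracket B \rrbracket_\phi$. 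For $\Ra$-\textsc{intro}, deriving $\lambda\alpha.\pi$ of type $C \equiv A \Ra B$ from $\Gamma,\alpha:A \vdash \pi : B$, I need $(\lambda\alpha.\sigma\pi)\,\nu \in \llbracket B \rrbracket_\phi$ for every $\nu \in \llbracket A \rrbracket_\phi$; extending $\sigma$ to send $\alpha$ to $\nu$ keeps it adequate, so the induction hypothesis gives $(\nu/\alpha)\sigma\pi \in \llbracket B \rrbracket_\phi$, and the neutral $\beta$-redex is recovered by (CR$_3$) combined with (CR$_2$) and the fact that $\llbracket A \rrbracket_\phi, \llbracket B \rrbracket_\phi \subseteq SN$, exactly as sketched in Section~\ref{concept}.

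The genuinely Curry-specific cases are the quantifier rules. For the Curry $\forall$-\textsc{elim} rule, where $\pi$ keeps its form while the type changes from $B \equiv \forall x.A$ to $C \equiv (t/x)A$, the induction hypothesis gives $\sigma\pi \in \llbracket B \rrbracket_\phi = \tildefa\{\llbracket A \rrbracket_{\phi+\langle x,e\rangle} : e \in M\}$. Since $\tildefa$ is a greatest lower bound for $\leq$ (inclusion), this intersection is contained in the particular member $\llbracket A \rrbracket_{\phi+\langle x, \llbracket t\rrbracket_\phi\rangle}$, which by Lemma~\ref{lsub} equals $\llbracket (t/x)A \rrbracket_\phi = \llbracket C \rrbracket_\phi$; hence $\sigma\pi \in \llbracket C \rrbracket_\phi$. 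Crucially, \emph{no} change to $\pi$ is required, which is precisely why the standard intersection-based $\tildefa$ of a pre-Heyting algebra suffices. For $\forall$-\textsc{intro}, with $x \notin FV(\Gamma)$ and $B \equiv \forall x.A$, I must show $\sigma\pi \in \tildefa\{\llbracket A \rrbracket_{\phi+\langle x,e\rangle} : e \in M\}$, i.e.\ $\sigma\pi \in \llbracket A \rrbracket_{\phi+\langle x,e\rangle}$ for \emph{every} $e$; here the side condition $x\notin FV(\Gamma)$ guarantees that $\sigma$ remains adequate for $\Gamma$ with respect to the modified environment $\phi+\langle x,e\rangle$ (the interpretations of the context propositions are unaffected by the value at $x$), so the induction hypothesis applies uniformly in $e$, and the greatest-lower-bound property closes the case.

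I expect the main obstacle to be the careful bookkeeping of environments and free variables in the two quantifier cases---specifically, verifying that adequacy of $\sigma$ is preserved when we vary the environment at $x$ (which rests on $x \notin FV(\Gamma)$ for introduction) and correctly aligning Lemma~\ref{lsub} with the side condition $C \equiv (t/x)A$ for elimination. Everything else is routine once the right strengthened statement (quantifying over all adequate $\sigma$ and all environments $\phi$) is fixed at the outset; the model condition handles every $\equiv$ side condition transparently, and (CR$_1$)--(CR$_3$) supply exactly the closure properties the $\Ra$-\textsc{intro} $\beta$-expansion argument needs.
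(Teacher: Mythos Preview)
Your overall strategy---prove the adequacy lemma by induction on the typing derivation and then instantiate with the identity substitution---is exactly what the paper has in mind, and your treatment of the \textsc{axiom} and $\Ra$ rules is correct. However, there is a genuine mismatch in the quantifier cases: this theorem is stated for minimal deduction modulo \emph{\`a la Church}, not \`a la Curry, and the algebra $\calC$ of Definition~\ref{acr} uses the Church-style $\tildefa$, namely $\tildefa\,A = \{\pi \mid \forall t,\ \forall a\in A,\ \pi\,t \in a\}$, which is \emph{not} set-theoretic intersection. You have instead argued the Curry-style rules (where the proof-term is unchanged under $\forall$-intro and $\forall$-elim) against an intersection-based $\tildefa$; that is the content of the later Theorem~\ref{sound} for $\cprime$, not of the present statement.

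Concretely, for Church-style $\forall$-\textsc{elim} the conclusion is $\Gamma\vdash \pi\,t : C$, so from $\sigma\pi \in \llbracket \forall x.A\rrbracket_\phi$ you must extract $(\sigma\pi)\,u \in \llbracket A\rrbracket_{\phi+\langle x,e\rangle}$ (for a suitable choice of $u$ and $e$) and then invoke Lemma~\ref{lsub}; the Church $\tildefa$ gives you exactly this, whereas mere intersection would not produce the applied term $\sigma(\pi\,t)$. Dually, for Church-style $\forall$-\textsc{intro} the conclusion is $\lambda x.\pi$, and showing $\sigma(\lambda x.\pi)\in\tildefa\{\ldots\}$ requires, for every term $t'$ and every $e$, that the neutral redex $(\lambda x.\sigma\pi)\,t'$ lie in $\llbracket A\rrbracket_{\phi+\langle x,e\rangle}$; this needs a (CR$_3$)/(CR$_2$) closure argument parallel to your $\Ra$-\textsc{intro} case (and, in the Church setting, $\sigma$ must also substitute term variables in accordance with $\phi$ so that the head reduct matches the induction hypothesis). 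Once you redo the two quantifier cases with the Church typing rules and the Church $\tildefa$, the proof goes through along the lines the paper indicates.
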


\begin{proof}
The proof of that theorem consists in proving the (right form of) {\em adequacy lemma} presented in section (see \ref{concept}). 
\end{proof}

\bigskip

\subsection{On the interpretation of universal quantification}\hfill\\

\noindent The previous definition, of reducibility candidates as $\calC$-valued models, leads to interpret a proposition $\forall x.A$  as: \vspace{1mm}
\\$\llbracket \fa x.A \rrbracket_{\phi}
= \tfa ~\{
\llbracket A \rrbracket_{\phi + \langle x, t\rangle}~|~t \in  M \}  \vspace{1mm}
\\$\mbox{~}\hspace{13mm}$= \{ \pi $ such that for all terms $t_1$, $t_2$,~ $\pi t_1 \in \llbracket A \rrbracket_{\phi + \langle x, t_2\rangle}\} \vspace{1mm}
\\$\mbox{~}\hspace{13mm}$= \{ \pi $ such that for all terms $t_1$, $t_2$,~ $\pi t_1 \in \llbracket (t_2/x)A \rrbracket_{\phi}\}$.

\bigskip

\noindent This definition does not capture exactly the $\fa$-elim rule. Indeed it is too restrictive since $t_1$ and $t_2$ are not synchronized. For example, if we consider the theory of natural numbers, we could imagine a proof of $\forall x,~ x \geq 0$, which does not gives a proof of $t_1 \geq 0$ when applied to $t_2$, a term different from $t_1$. This sort of proof would not belong to $\llbracket \forall x,~ x \geq 0\rrbracket_{\phi}$. This causes difficulties to prove completeness of $\calC$-valued models as such. In the following, we present two solutions. The first one consists in making more precise this interpretation but this necessitates to extend the notion of pre-Heyting algebra. And the second one considers proof-terms \`a la Curry.


\bigskip
\bigskip

\section{Toward completeness}
\label{tc}

\noindent We have seen that reducibility candidates provide  a sound criterion for strong normalization of theories expressed in minimal deduction modulo. In order to prove that it also forms a complete criterion, one has to prove that whenever a theory is strongly normalizing, it is possible to build a model valued on reducibility candidates ($\calC$) for that theory. The method we use is closer to Henkin's proof \cite{Henkin49} of completeness of Boole-valued models for consistency in first order logic, than to G\"odel's one \cite{Godel} since it consists in directly building that reducibility candidates model from the hypothesis of strong normalization.

\smallskip

\noindent How to use that hypothesis of strong normalization? A possibility is to consider well-typed proof-terms, {\em i.e.} first associate to each ordered pair of a proposition and an environment the set of proof-terms that are proofs of that proposition (in some context). And second, prove that it forms a reducibility candidate when the theory is strongly normalizing. 
We shall see that this na\"ive idea does not apply as such, but it brings up a new manner to define models valued on a pre-Heyting algebra.  The usual way, that was presented in section \ref{rcpha}, is to define the interpretation only on atomic propositions, and then obtain its value on non-atomic propositions by using $\tra$ and $\tfa$ as inductive definitions. This way, the only needed property to obtain a model is the fact that interpretations of congruent propositions are equal (we shall say {\em adapted to the congruence}). The other way we propose consists in defining more generally interpretations as functions from ordered pairs of a (not necessarily atomic) proposition and a environment to elements of the domain of the considered algebra. 

\smallskip

\noindent This way, models are defined as interpretations

\begin{iteMize}{(1)}
\item which are adapted to the congruence,
\item which satisfies the property of lemma \ref{lsub}, namely the {\em substitution property}, 
\item and such that the interpretation of $A \Ra B$ is the value of $\tra$ applied to the interpretations of $A$ and $B$, and the analog property for $\tfa$ (we shall say {\em adapted to the connectives}).
\end{iteMize}

\smallskip

\noindent As we shall see in the following,  defining an interpretation adapted to the congruence by definition and then proving that it is also adapted to the connectives, may be simpler than the opposite. It is the reason why we propose those slightly different definitions of interpretations and models, which emphasize that models can be built in a different way from the usual one.
Notice that we give a simplified definition since we shall only consider models based on a term-model equal to the set of terms defined by the considered first order signature.

\begin{definition}[Environment (2)]\hfill\\
  {\em Environments} are now functions from term-variables to terms ({\em i.e.} substitutions).
\end{definition}

\smallskip

\begin{definition}[Interpretation (2)]\hfill\\
  Given a pre-Heyting algebra $\calB$,
   a {\em $\calB$-valued interpretation} is a function which leads
all ordered pairs of a proposition and a environment to an element of $\calB$.
\end{definition}

\smallskip

\noindent Let $\calT$ be a theory expressed in minimal deduction modulo, given by a first order signature and a congruence relation $\equiv$.

\begin{definition}[Model (2)] \hfill\\
We write $M$ the set of terms of $\calT$.
\\Let ${\cal B} = \langle \calB, {\leq}, \calA,
\tildeimp, \tildefa\rangle$ be a pre-Heyting algebra.
\\ A $\calB$-valued interpretation (leading propositions $A$ and environments $\phi$ to  $\llbracket A \rrbracket_{\phi}^M$) is a {\em model} of the theory $\calT$ if and only if for all environments $\phi$, propositions $A,B$ terms $t$ and term-variables $x$,

\medskip

\noindent -- it is {adapted to the connectives}, \emph{i.e}
 
\begin{iteMize}{(1)}

\item $\llbracket A \Ra B \rrbracket_{\phi}^M = \llbracket A \rrbracket_{\phi}^M \tildeimp \llbracket B \rrbracket_{\phi}^M$

\item $\llbracket \fa x.A \rrbracket_{\phi}^M = 
\tildefa~\{
\llbracket A \rrbracket_{\phi + \langle x, t\rangle}~|~t \in  M \}$
when it is defined.
\end{iteMize}

\bigskip

\noindent -- it satisfies the substitution property, {\em i.e} $~\llbracket (t/x)A \rrbracket_{\phi}^M = \llbracket A \rrbracket_{\phi + \langle x,t \rangle}^M $

\medskip
\noindent -- it is adapted to the congruence, {\em i.e}  $~$if $A \equiv B$ then $\llbracket A \rrbracket_{\phi}^M \equiv \llbracket B \rrbracket_{\phi}^M$
\end{definition}

\smallskip

\noindent Let us now see how to tune the usual definition of reducibility candidates to obtain completeness for strong normalization. 

\subsection{On the (CR$_3$) property}
\label{otcr3p}\hfill\\
As previously mentioned, the way we use the strong normalization hypothesis is to consider first well-typed proof-terms, by interpreting a proposition by its proofs (in some context). Notice that such an interpretation is obviously adapted to the congruence since the sets of proofs of two equivalent propositions are equal. 
But such sets of well-typed proofs cannot satisfy (CR$_3$) since all sets of proof-terms satisfying (CR$_3$) contain ill-typed proof-terms. For example, $\alpha\alpha$ is neutral and normal hence it belongs to all reducibility candidates.
But if $\alpha\alpha$ is well-typed in a theory then so does $(\lambda\alpha.\alpha\alpha)(\lambda\alpha.\alpha\alpha)$ which is not normalizing and therefore cannot be well-typed in a strongly normalizing theory (as seen in the end of section \ref{normdef}). 
\\ Since we want to avoid ill-typed proof-terms like $\alpha\alpha$ from our new reducibility candidates, we make a first restriction on the (CR$_3$) property leading to, for some set $E$ of proof-terms:

\medskip
\noindent (CR$_{3\textrm{aux}}$) if a proof-term is neutral, {\em not normal} and all its one-step reducts belong to $E$ \\ \mbox{\hspace{9ex}} then it also belongs to $E$.

\smallskip

\noindent This way, (CR$_{3\textrm{aux}}$)-extensions of proofs of a proposition are proof-terms such that all reduction sequences from it eventually reach a proof of  that proposition.

\noindent But if we simply define the interpretation of a proposition as the (CR$_{3\textrm{aux}}$)-extension of proofs of that proposition, we do not get a interpretation adapted to the connective $\Ra$. Indeed, in that case, if $\pi$ belongs to the interpretation of a proposition $B$ (let us write it $\dd{B}$ without considering environments for the moment) and $\pi$ is not a proof of $B$ then $\pi$ is a (CR$_{3\textrm{aux}}$)-extension of  a proof of $B$, and for all proof-variables $\alpha$, not free in $\pi$, $\lambda\alpha.\pi$ belongs to $\dd{A} \tra \dd{B}$ but not to $\dd{A \Ra B}$ (with the usual $\tra$ of reducibility candidates). 

\smallskip

\begin{iteMize}{$\bullet$}
\item For all $\pi' \in \dd{A}$, one can prove that all reducts of the neutral proof-term $(\lambda\alpha.\pi)\pi'$ belong to $B$ by induction on the lengths of the maximal reductions sequence from $\pi$ and $\pi'$, and the fact that the head-reduct $(\pi'/\alpha)\pi = \pi \in \dd{B}$. Hence $\pi$ belongs to $\dd{A} \tra \dd{B}$.

\item But if $\pi$ is not a proof of $B$ then $\lambda \alpha.\pi$ is not a proof of $A \Ra B$ and (CR$_{3\textrm{aux}}$) cannot prove that $\lambda \alpha.\pi$ belongs to $\dd{A \Ra B}$ since $\lambda \alpha.\pi$ is not neutral. 
\end{iteMize}

\smallskip

\noindent Hence $\dd{A \Ra B} \neq \dd{A} \tra \dd{B}$, {\em i.e.} $\dd{.}$ is not adapted to the connectives.

\smallskip

\noindent In order to get connectives adaptation, we propose to relax this (CR$_{3\textrm{aux}}$) property by authorizing those "neutral not normal expansions" not only one by one at the root of the syntax tree representing  a proof-term, but simultaneously at different nodes of that tree. This leads to the following definition:

\medskip

\noindent (CR$_3'$)  for all $n \in \N$, for all proof-terms $\nu, \mu_1,\dots,\mu_n$,  if 
		
		\begin{iteMize}{-} 
			\item for all $i \leq n$, $\mu_i$ is neutral and not normal,
			\item for all proof-terms $\rho_1,\dots, \rho_n$ such that for all $i \leq n$, $\mu_{i} \ra \rho_i$, we have $[\rho_i/\alpha_i]_{i}\nu \in E$	
			\item{~} then we have $[\mu_i/\alpha_i]_i\nu \in E$.
		\end{iteMize}

\bigskip 

\noindent where $[\mu_i/\alpha_i]_i\nu$ denotes the sequence of substitutions with capture of $\alpha_i$ by $\mu_i$ for $0 \leq i \leq n$.

\bigskip

\noindent With this definition, if $\pi$ is a (CR$_3'$)-expansion of a proof of $B$ (in some context) and $\alpha$ is a proof-variable not free in $\pi$, then $\lambda \alpha.\pi$ is a (CR$_3'$)-expansion of a proof of $A \Ra B$ (in the same context) (see lemma \ref{lambdacl}). And we get back the fact that our interpretation is adapted to the connective $\Ra$.

\medskip

\subsection{On the interpretation of the universal quantification}\hfill\\
A last problem for proving completeness of usual reducibility candidates for minimal deduction modulo \`a la Church comes from the definition of $\tfa$.
As seen in the previous section, this definition leads to interpret a proposition $\forall x.A$  as the set of proof-terms which lead all terms $t_1$ to the interpretation of $(t_2/x)A$, for all terms $t_2$. 
\\ This prevents the interpretation defined above, based on well-typed proof-terms, from being adapted to the connective $\fa$, since the classical $\tfa$ does not model precisely enough the $\fa$-elim rule. Indeed if $\pi$ is a proof of $\forall x.A$, and $t_1$, $t_2$ are terms, the $\fa$-elim rule cannot help to deduce that $\pi t_1$ is a proof of $(t_2/x)A$.
\\ In order to synchronize those two terms in the definition of $\tfa$ in reducibility candidates \`a la Church, the author defined in \cite{pstt} and \cite{Theseoim} the notion of language-dependent truth values algebras (ldtva). Defining reducibility candidates (with (CR$_3'$)) as a model valued on a ldtva provides a sound and complete criterion for strong normalization of {\em non-confusing} theories in minimal deduction modulo \`a la Church.
\\ In the next section, we define sound and complete reducibility candidates for {\em both} confusing and non-confusing theories in minimal deduction \`a la Curry.
%


\bigskip
\bigskip

\section{Complete reducibility candidates \`a la Curry}
\label{crcalac}

\noindent This last section is devoted to the definition of a complete sound and complete semantics for strong normalization in minimal deduction modulo \`a la Curry. Considering Curry-style proof-terms allows to use the classical notion of pre-Heyting algebra. Since terms do not appear in proof-terms in minimal deduction modulo \`a la Curry, it allows to define $\tfa$ as a usual intersection. Moreover, it provides a stronger result than previous results (in minimal deduction modulo \`a la Church) since it concerns both confusing and non-confusing theories. And, icing on the cake, the proof of completeness of those reducibility candidates is considerably shorter than the one concerning minimal deduction modulo \`a la Church.

\medskip

\noindent The new pre-Heyting algebra of reducibility candidates we propose for minimal deduction \`a la Curry differs on two points, from the usual (only sound) one for minimal deduction \`a la Church, presented in section \ref{rcpha}. The first point is that $\tfa$ now is {\em classical set-intersection} since we consider proof-terms \`a la Curry. The second point is that the domain we now consider is the set of proof-terms satisfying (CR$_1$), (CR$_2$) and the new property {\em (CR$_3'$)} (to ensure completeness while keeping soundness for strong normalization).
 
\begin{definition}[The algebra of complete reducibility candidates \`a la Curry: $\cprime$]\label{accr}\hfill\\
The domain of $\cprime$ is the set of {\em non-empty} sets or proof-terms which satisfy the properties (CR$_1$), (CR$_2$) and (CR$_3'$). 
\\ The set $\calA$ is $\wp(\calC')$. 
\\ $\leq$ is set inclusion.
\\ For all $a,b \in \cprime$, $a \tildeimp b$ is the set of proof-terms $\pi$ such that for all $\mu \in a$, $\pi\mu \in b$.
\\ For all $A \in \calA$, $\tfa A$ is the set of proof-terms $\pi$ belonging to all $a \in A$.
\end{definition}

\noindent This definition provides a pre-Heyting algebra since we can easily check that for all $a,b\in \calC'$ and $A \in \wp(\calC')$, $a ~\tra~ b \in \calC'$ and $\tfa A \in \calC'$, and that $\tildefa$ is an infinite greatest lower bound for set inclusion. 

\subsection{Soundness}\hfill\\
In this section, we prove that the existence of a $\cprime$-valued model entails strong normalization of theories expressed in minimal deduction modulo. In other words, replacing the usual (CR$_3$) by (CR$_3'$) keeps soundness for strong normalization. Soundness (theorem \ref{sound}) is entailed, as usual, by the (right form of) adequacy lemma (lemma \ref{adq2}).

\begin{lemma}[Adequacy]
\label{adq2}\hfill\\
 If ~$\dd{.}_.$ is a $\cprime$-valued model of  a theory in minimal deduction modulo \`a la Curry,
	\\then for all propositions $A$, contexts $\Gamma$,environments $\phi$, proof-terms $\pi$ and substitutions $\sigma$ 
	such that for all declarations $\alpha: B$ in $\Gamma$, $\sigma \alpha \in \dd{B}_\phi$, we have:

	\begin{center} if $\Gamma \vdash \pi:A$ then $\sigma \pi \in \dd{A}_\phi$.\end{center}
\end{lemma}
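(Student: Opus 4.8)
The plan is to prove the adequacy lemma by induction on the derivation of $\Gamma \vdash \pi : A$, proceeding by case analysis on the last typing rule used. The statement is formulated with an arbitrary substitution $\sigma$ that is \emph{adequate} for $\Gamma$ (sending each $\alpha : B$ in $\Gamma$ to an element of $\dd{B}_\phi$), and with an arbitrary environment $\phi$; both must be universally quantified in the induction hypothesis so that the $\Ra$-intro and $\forall$-elim/intro cases go through. Since $\dd{.}_.$ is a model, I may freely use that it is adapted to the connectives, satisfies the substitution property, and is adapted to the congruence; in particular, whenever $A \equiv B$ I have $\dd{A}_\phi = \dd{B}_\phi$, which lets me replace a proposition by any congruent one and handle the $\equiv$ side-conditions attached to each Curry-style rule in Fig.~\ref{f:Typing}.

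First I would treat the \textsc{axiom} case: if $\Gamma = \Gamma', \alpha : A$ derives $\alpha : B$ with $A \equiv B$, then adequacy of $\sigma$ gives $\sigma \alpha \in \dd{A}_\phi$, and congruence-adaptation gives $\dd{A}_\phi = \dd{B}_\phi$, so $\sigma\alpha \in \dd{B}_\phi$. For the $\Ra$-elim case, $\pi = \pi_1 \pi_2$ with $\Gamma \vdash \pi_1 : C$, $\Gamma' \vdash \pi_2 : A$ and $C \equiv A \Ra B$; the induction hypothesis yields $\sigma\pi_1 \in \dd{C}_\phi = \dd{A \Ra B}_\phi = \dd{A}_\phi \tra \dd{B}_\phi$ and $\sigma\pi_2 \in \dd{A}_\phi$, so by the definition of $\tra$ in $\cprime$ we get $\sigma\pi_1\,\sigma\pi_2 = \sigma(\pi_1\pi_2) \in \dd{B}_\phi$. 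The two Curry-style $\forall$ rules are the pleasant simplification afforded by working \`a la Curry: since the proof-term is unchanged by $\forall$-intro and $\forall$-elim, I only need that $\dd{\forall x.A}_\phi$ equals the intersection $\tfa \{ \dd{A}_{\phi + \langle x, t\rangle} \mid t \in M\}$ and that, by the substitution property, $\dd{A}_{\phi + \langle x, t\rangle} = \dd{(t/x)A}_\phi$; for $\forall$-elim membership in the intersection gives membership in any instance, and for $\forall$-intro one checks the side condition $x \notin FV(\Gamma)$ guarantees $\sigma$ stays adequate as $\phi$ ranges over the $x$-variants.

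The main obstacle, as flagged in the preceding discussion, is the $\Ra$-intro case, where $\pi = \lambda\alpha.\mu$ with $\Gamma, \alpha : A \vdash \mu : B$ and $C \equiv A \Ra B$. Here the induction hypothesis tells me that for every \emph{adequate} extension of $\sigma$ — in particular for $\sigma[\alpha := \nu]$ with $\nu \in \dd{A}_\phi$ — I have $\sigma[\alpha:=\nu]\mu \in \dd{B}_\phi$. To conclude $\sigma(\lambda\alpha.\mu) = \lambda\alpha.\sigma\mu \in \dd{A \Ra B}_\phi = \dd{A}_\phi \tra \dd{B}_\phi$, I must show that for all $\nu \in \dd{A}_\phi$ the application $(\lambda\alpha.\sigma\mu)\nu$ lies in $\dd{B}_\phi$. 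This is precisely the $\beta$-expansion step: the head reduct is $(\nu/\alpha)\sigma\mu \in \dd{B}_\phi$ by the hypothesis, but the other reducts arise from reductions inside $\sigma\mu$ and inside $\nu$. I expect this to be settled by the closure property (CR$_3'$) together with an auxiliary lemma on $\lambda$-closure (the lemma \ref{lambdacl} referenced earlier), arguing by induction on the sum of the lengths of the maximal reduction sequences of $\sigma\mu$ and $\nu$ (both finite by (CR$_1$) and $\dd{A}_\phi, \dd{B}_\phi \subseteq SN$), and using (CR$_2$) to keep the reduced subterms inside their candidates. I would isolate this as a separate lemma and invoke it here, so that the adequacy proof itself reduces to the clean case analysis above.
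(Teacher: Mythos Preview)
Your plan matches the paper's proof almost exactly: induction on the typing derivation, case analysis on the last rule, congruence-adaptation to discharge the $\equiv$ side-conditions, the definition of $\tra$ for $\Ra$-elim, and the intersection-style $\tfa$ together with the substitution property for the two Curry-style $\forall$ rules. The paper also handles the $\Ra$-intro case inline with precisely the ingredients you name --- an inner induction on the sum of the maximal reduction lengths of $\sigma(\lambda\alpha.\mu)$ and $\nu$ (both in $SN$ by (CR$_1$)), (CR$_2$) to keep the reduced arguments inside their candidates, and (CR$_3'$) applied to the neutral, non-normal term $(\lambda\alpha.\sigma\mu)\,\nu$ once all its one-step reducts are known to lie in $\dd{B}_\phi$.

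The one point to correct is your invocation of Lemma~\ref{lambdacl}. That lemma belongs to the \emph{completeness} direction: it is stated and proved specifically for the interpretation $Cl(.)_.$ built from proofs in the universal context $\Delta$, and its argument unwinds the inductive definition of the stages $Cl^k(B)_\phi$. It says nothing about an arbitrary $\cprime$-valued model $\dd{.}_.$, so it cannot be cited in the soundness proof. You do not need it anyway: the inner induction on reduction lengths plus (CR$_2$) and (CR$_3'$) is self-contained and is exactly what the paper does. Drop the reference to Lemma~\ref{lambdacl} and your argument coincides with the paper's.
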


\proof By induction on the length of the derivation of $\Gamma \vdash \pi: A$. By case on the last rule used. If the last rule used is :
	\begin{iteMize}{$\bullet$}
		\item axiom: in this case, $\pi$ is a variable $\alpha$, and $\Gamma$ contains a declaration $\alpha:B$ with $A \equiv B$.
			Then $\sigma  \alpha \in \dd{B}_\phi = \dd{A}_\phi$.
	\smallskip
		\item $\Ra$-intro: in this case, $\pi$ is an abstraction $\lambda\alpha.\tau$, and we have \mbox{$\Gamma,\alpha:B \vdash \tau:C$} 
				with $A \equiv B \Ra C$. Let $\sigma'$ such that for all variables $\beta$ declared in $\Gamma$, $\sigma' \beta  = \sigma \beta$
				and $\sigma' \alpha$ is an element of $\dd{B}_\phi$. Then $\sigma'\tau \in \dd{C}_\phi$ by induction hypothesis (and   
				$\sigma'\tau$ is in $SN$, therefore $\sigma (\lambda \alpha.\tau)$ is also in $SN$).  Let $\pi' \in \dd{B}_\phi$, we prove by induction 
				on the sum of maximal lengths of a reductions sequence from $\sigma(\lambda\alpha.\tau)$ and $\pi'$ (each in $SN$) 
				that every one-step reduct of the neutral not normal proof-term $\sigma(\lambda\alpha.\tau) ~ \pi'$ is in $\dd{C}_\phi$.
				If the one-step reduct is $\sigma(\pi'/\alpha)\tau$, we conclude by induction hypothesis (on the length of the derivation)
				since $\pi' \in \dd{B}_\phi$. Otherwise, the reduction takes place either in $\sigma(\lambda\alpha.\tau)$, either in $\pi'$. 
				We conclude first by induction hypothesis on the sum of the maximal lengths of reductions sequence from $\sigma(\lambda\alpha.\tau)$
				and $\pi'$. And second by the fact that both $\dd{B}_\phi $ and $\dd{B\Ra C}_\phi $ satisfy (CR$_2$).
				Finally, $\sigma(\lambda\alpha.\tau) ~ \pi' \in \dd{C}_\phi$, since it satisfies (CR$_3'$) and 
					$\sigma(\lambda\alpha.\tau) ~ \pi'$ is neutral, not normal.
				Hence  $\sigma(\lambda\alpha.\tau) \in \dd{B}_\phi \tra \dd{C}_\phi = \dd{B\Ra C}_\phi = \dd{A}_\phi$
		\smallskip 
		\item $\Ra$-elim: in this case, $\pi$ is an application $\rho \tau$, and we have \mbox{$\Gamma \vdash  \rho : C \equiv B\Ra A$} and 
		$\Gamma \vdash \tau : B$. Therefore, by induction hypothesis, 
		$\sigma\rho\in \dd{B\Ra A}_\phi = \dd{B}_\phi \tra \dd{A}_\phi$ and  $ \sigma \tau \in \dd{B}_\phi$. 
		Therefore $\sigma (\rho \tau) \in \dd{A}_\phi$.
	\smallskip 
		\item $\forall$-intro: in this case, we have \mbox{$\Gamma \vdash \pi: B$} with 
			\mbox{$A \equiv \forall x.B$}.
			Hence for all terms $t$,
			$\sigma \pi \in \dd{B}_{\phi+\langle x,t \rangle}$  by induction hypothesis,
			since, $\phi+\langle x,t \rangle$ is an environment.
			And $\sigma \pi \in \dd{\forall x.B}_\phi = \dd{A}_\phi$ by definition of $\tfa$. 	
			
	\smallskip
		\item $\forall$-elim: in this case, we have \mbox{$\Gamma \vdash \pi: \forall x.B$} with 
			\mbox{$A \equiv (t/x)B$}. Hence, by induction hypothesis, $\sigma \pi \in \dd{B}_{\phi + \langle x,t \rangle} = \dd{(t/x)B}_\phi = \dd{A}_\phi$, by the substitution property.\qed
	\end{iteMize}\smallskip

\noindent As previously mentioned, the adequacy lemma directly entails soundness for strong normalization.

\begin{theorem}[Soundness]
	\label{sound}\hfill\\
	If a theory in minimal deduction \`a la Curry has a $\cprime$-valued model, then it is strongly normalizing.
\end{theorem}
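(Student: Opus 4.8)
The plan is to derive Theorem~\ref{sound} directly from the Adequacy Lemma (Lemma~\ref{adq2}), which has already done all the substantive work. The whole point of the adequacy lemma is that it entails soundness almost immediately, so the proof should be short: I will assume the theory has a $\cprime$-valued model $\dd{.}_.$ and show that every well-typed proof-term is strongly normalizing.

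First I would fix an arbitrary typing judgement $\Gamma \vdash \pi : A$ and an arbitrary environment $\phi$, and aim to conclude $\pi \in SN$. To apply Lemma~\ref{adq2} I need a substitution $\sigma$ that is \emph{adequate} for $\Gamma$, that is, one sending each proof-variable $\alpha$ declared of type $B$ in $\Gamma$ into $\dd{B}_\phi$. The natural choice is the identity substitution $\sigma = \mathrm{id}$, which sends each $\alpha$ to $\alpha$ itself. This is where the non-emptiness and the (CR$_3'$) property of the candidates pay off: every $\dd{B}_\phi$ lies in $\cprime$, and as the discussion at the end of section \ref{concept} recalls, a proof-variable $\alpha$ is neutral and normal, hence (being a vacuous case of (CR$_3'$), or by the general fact that normal neutral terms lie in every candidate) belongs to $\dd{B}_\phi$. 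Thus the identity is adequate, and Lemma~\ref{adq2} yields $\mathrm{id}\,\pi = \pi \in \dd{A}_\phi$.

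Finally I would invoke the (CR$_1$) property: since $\dd{A}_\phi \in \cprime$, we have $\dd{A}_\phi \subseteq SN$, so $\pi \in \dd{A}_\phi \subseteq SN$, i.e. $\pi$ is strongly normalizing. As $\Gamma$, $A$ and $\pi$ were arbitrary, every well-typed proof-term is strongly normalizing, which is exactly the definition of a strongly normalizing theory.

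I do not expect a genuine obstacle here, since the difficulty has been absorbed into the adequacy lemma; the only point requiring a word of care is verifying that the identity substitution is adequate, i.e. that proof-variables belong to the relevant candidates. This is guaranteed precisely because the candidates satisfy (CR$_3'$) and are non-empty—the features built into $\cprime$ in Definition~\ref{accr}—so that all normal neutral proof-terms, variables in particular, are members of every candidate.
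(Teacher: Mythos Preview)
Your overall strategy—derive soundness from the adequacy lemma—is exactly the paper's. But your choice $\sigma=\mathrm{id}$ does not work, and the justification you give for it is mistaken.

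You claim that a proof-variable $\alpha$, being neutral and normal, lies in every element of $\cprime$ ``as a vacuous case of (CR$_3'$)''. This is false: (CR$_3'$) only licenses adding terms of the form $[\mu_i/\alpha_i]_i\nu$ where each $\mu_i$ is neutral and \emph{not} normal. Taking $n=0$ in (CR$_3'$) is a tautology, and no instance of (CR$_3'$) forces a normal term into a candidate. This is precisely the point of replacing (CR$_3$) by (CR$_3'$): section~\ref{otcr3p} explains that normal neutral terms such as $\alpha\alpha$ must be kept out of the candidates used for completeness, and indeed the sets $Cl(A)_\phi$ built in section~\ref{crcalac} contain only those variables declared of type $\phi A$ in $\Delta$, not all variables. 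The remark you cite at the end of section~\ref{concept} is about the original (CR$_3$), not about (CR$_3'$).

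The paper's proof sidesteps this: since every $\dd{B}_\phi$ is non-empty (non-emptiness is part of Definition~\ref{accr}), one can choose \emph{some} adequate $\sigma$; adequacy gives $\sigma\pi\in\dd{A}_\phi\subseteq SN$ by (CR$_1$), and $\sigma\pi\in SN$ implies $\pi\in SN$ because any infinite reduction sequence from $\pi$ induces one from $\sigma\pi$. Replace the identity by such a $\sigma$ and add this last observation, and your argument goes through.
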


\begin{proof}
	If $\dd{.}_.$ is a $\cprime$-valued model of this theory then for all judgements $\Gamma \vdash \pi : A$ and $\sigma$ and $\phi$ 
	as in the previous proposition, we have $\sigma\pi \in \dd{A}_\phi$ hence $\sigma \pi \in SN$, therefore $\pi \in SN$.
\end{proof}

\smallskip

\subsection{Completeness}\hfill\\
In this section, we prove that the definition of $\cprime$-valued model also gives a complete criterion for strong normalization in minimal deduction \`a la Curry. Following Henkin's method \cite{Henkin} rather than G\"odel's one \cite{Godel}, we build directly a $\cprime$-valued model from the strong normalization hypothesis of a theory.
As explained in section \ref{tc}, we interpret propositions $A$  as the (CR$_3'$) expansion of proofs of $A$  in a general context $\Delta$.
We prove (lemma \ref{clc'}) that this interpretation takes its values in $\cprime$ {\em when the theory is strongly normalizing} (the strong normalization hypothesis is only needed to prove that this interpretation satisfy (CR$_1$)). This interpretation constitutes a model since it is trivially adapted to the congruence, it is adapted to the substitution (lemma \ref{clsubst}), and to the connectives (lemmas \ref{clramorph} and \ref{clfamorph}). Finally we obtain the completeness theorem \ref{comp}.

\begin{definition}[The universal context $\Delta$]\hfill\\
 	We consider a context which contains an infinite number of declarations for each proposition of the considered theory. 
\end{definition}

\begin{definition}[$\Omega$, a particular set of proof-terms]\hfill\\
 	For convenience, we write $\Omega$ the set of strongly normalizing, neutral, not normal proof-terms.  
\end{definition}

\noindent In the following, when $i$ and $n$ are integers, $\mu_1,\dots,\mu_n$, $\rho_1,\dots,\rho_n$ are proof-terms, $\alpha_1,\dots,\alpha_n$ are proof-variables, we shall write $[\mu_i/\alpha_i]_{i\leq n}$ for the substitution (with capture) \linebreak $[\mu_n/\alpha_n]\dots[\mu_1/\alpha_1]$ (we may write $[\mu_i/\alpha_i]_{i}$ when $n$ is clear from context). 
We shall also write $(\mu_i)_i \ra (\rho_i)_i$, when for all $i \leq n$, $\mu_i \ra \rho_i$. 

\bigskip

\noindent We define the interpretation of a proposition $A$ and an environment $\phi$ ({\em i.e.} a term-substitution in our case), as the (CR$_3'$)-countable iteration expansion of the set of proofs of $\phi A$.

\begin{definition}[Closure]\hfill\\
	For all propositions $A$ and environments $\phi$, we define $Cl(A)_\phi$ as follows : \\
	for all $k \in \N$,
	\begin{iteMize}{$\bullet$}
		\item $Cl^0(A)_\phi = \{ \pi$  such that $\Delta \vdash \pi : \phi A\}$ \vspace{2.5mm}

		\item 
			$Cl^{k+1}(A)_\phi =  \{\pi$ such that there exists $n \in \N$, a proof-term $\nu_\pi$ and $ (\mu_i)_{i\leq n}  \subseteq \Omega$:\\
			 \mbox{~}\hspace{22mm} such that $\pi = [\mu_i/\alpha_i]_{i\leq n}~\nu_\pi$ and for all $(\rho_i)_{i\leq n}$\\
			 \mbox{~}\hspace{22mm} if $(\mu_i)_i \ra (\rho_i)_i$ then $[\rho_i/\alpha_i]_{i\leq n}~\nu_\pi \in Cl^k(A)_\phi \}$\vspace{2.5mm}


		\item $Cl(A)_\phi =  \mathop{\cup}_{j \in \N} Cl^j(A)_\phi$
	\end{iteMize}
\end{definition}

\noindent Notice first that the strong normalization of the $\mu_i$ is not a necessary hypothesis but it simplifies 
item (CR$_1$) of proof of lemma \ref{clc'} (which is detailed in \cite{Theseoim}). Notice also that this (CR$_3'$)-expansions iteration is monotonous for inclusion, {\em i.e.} for all propositions $A$, environments $\phi$ and $k\in \N$, $Cl^k(A)_\phi \subseteq Cl^{k+1}(A)_\phi$.
Hence for all propositions $A$ and environments $\phi$, $Cl(A)_\phi$ is not empty since neither is $Cl^0(A)_\phi$ (it contains, in particular, all proof-variables declared of type $\phi A$ in $\Delta$).


\bigskip

\noindent We prove now that this interpretation leads all ordered pairs of a proposition and an environment to elements of $\cprime$ {\em when the considered theory is strongly normalizing}.

\begin{lemma} 
\label{clc'}~\\
	If the considered theory is strongly normalizing then for all propositions $A$ and environments $\phi$, $Cl(A)_\phi$ belongs to $ \cprime$.
\end{lemma}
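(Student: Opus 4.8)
The plan is to check directly that $Cl(A)_\phi$ meets the four requirements for membership in $\cprime$: non-emptiness, together with (CR$_1$), (CR$_2$) and (CR$_3'$). Non-emptiness is immediate, since $Cl^0(A)_\phi\subseteq Cl(A)_\phi$ already contains the (infinitely many) proof-variables declared of type $\phi A$ in the universal context $\Delta$. Two elementary facts will be used throughout. First, the iteration is monotone, $Cl^k(A)_\phi\subseteq Cl^{k+1}(A)_\phi$: taking $n=0$ forces $\nu_\pi=\pi$ and reduces the defining clause to $\pi\in Cl^k(A)_\phi$. Second, $\ra$ is finitely branching, so whenever the members of a finite family all lie in some $Cl^j(A)_\phi$, monotonicity places them in a common level $Cl^K(A)_\phi$ with $K$ the maximum of the finitely many indices; this ``common level'' device is what repeatedly converts membership in $Cl(A)_\phi=\bigcup_j Cl^j(A)_\phi$ into membership in a single $Cl^{K+1}(A)_\phi$. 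A recurring structural remark is that, because each plug $\mu_i$ is neutral (never a $\lambda$-abstraction), any redex contracted in $\pi=[\mu_i/\alpha_i]_i\,\nu_\pi$ is cleanly of one of two kinds: either it is a redex of the skeleton $\nu_\pi$ (so $\pi\ra[\mu_i/\alpha_i]_i\,\nu_\pi'$ with $\nu_\pi\ra\nu_\pi'$), or it lies entirely inside one occurrence of a single plug $\mu_j$.

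For (CR$_2$) I would argue by induction on the least level $k$ containing $\pi$. The base case is exactly subject reduction, Lemma~\ref{p:sr}: if $\Delta\vdash\pi:\phi A$ and $\pi\ra\pi'$ then $\Delta\vdash\pi':\phi A$, so $\pi'\in Cl^0(A)_\phi$. At level $k+1$, using the neutrality dichotomy: in the skeleton case, $\pi'=[\mu_i/\alpha_i]_i\,\nu_\pi'$ keeps the same plug family, and each of its simultaneous one-step reducts $[\rho_i/\alpha_i]_i\,\nu_\pi'$ is a reduct of $[\rho_i/\alpha_i]_i\,\nu_\pi\in Cl^k(A)_\phi$, hence lies in $Cl(A)_\phi$ by the induction hypothesis; the common-level device then yields $\pi'\in Cl(A)_\phi$. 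In the plug case, after refining the skeleton so that each hole occurs exactly once, contracting inside $\mu_j$ yields a reduct $\mu_j'$; if $\mu_j'$ is normal it is absorbed into the skeleton and $\pi'$ lands immediately in $Cl^{k+1}(A)_\phi$ (its simultaneous reducts are precisely those of the original family with the $j$-th plug reduced to $\mu_j'$), while if $\mu_j'$ is still neutral and non-normal one reassembles the family with $\mu_j$ replaced by $\mu_j'$ and concludes again via the induction hypothesis at level $k$ and the common level.

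For (CR$_1$) the claim is $Cl^k(A)_\phi\subseteq SN$, proved by induction on $k$; this is the \emph{only} place the strong-normalization hypothesis on the theory enters. At the base, $\pi\in Cl^0(A)_\phi$ means $\Delta\vdash\pi:\phi A$, so $\pi\in SN$ by hypothesis. At level $k+1$, $\pi=[\mu_i/\alpha_i]_i\,\nu_\pi$ has all plugs in $\Omega$ (hence in $SN$) and all simultaneous one-step reducts in $Cl^k(A)_\phi\subseteq SN$ by the induction hypothesis. The idea is that skeleton reductions of $\pi$ mirror reductions of the strongly normalizing term $[\rho_i/\alpha_i]_i\,\nu_\pi$ (the plugs, being neutral, neither create nor destroy skeleton redexes), while reductions inside the plugs are controlled by the fact that each $\mu_i$ is strongly normalizing. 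I expect \textbf{this step to be the main obstacle}: the genuinely delicate point is the interleaving of skeleton and plug reductions, since a skeleton contraction may \emph{duplicate} a plug, so the naive measure $\sum_i\ell(\mu_i)$ (finite because $\mu_i\in SN$) is not globally monotone. The resolution is exactly the use of strong normalization of the plugs noted after the definition of $Cl$ (this is what ``simplifies item (CR$_1$)''), combining the finiteness of plug reductions with the mirroring on the strongly normalizing simultaneous reducts, with the remaining bookkeeping being of the kind detailed in \cite{Theseoim}.

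Finally, (CR$_3'$) holds essentially by construction, as the definition of $Cl^{k+1}$ was designed to guarantee it. Given neutral, non-normal $\mu_1,\dots,\mu_n$ such that every simultaneous one-step reduct $[\rho_i/\alpha_i]_i\,\nu$ lies in $Cl(A)_\phi$, I would first discard the $\alpha_i\notin FV(\nu)$, which do not affect $[\mu_i/\alpha_i]_i\,\nu$. For each remaining $i$, every one-step reduct of $\mu_i$ occurs as a subterm of some simultaneous reduct (choosing arbitrary one-step reducts for the other plugs, possible since each is neutral and non-normal hence reducible), and that simultaneous reduct is in $SN$ by (CR$_1$); therefore all one-step reducts of $\mu_i$ are strongly normalizing and so $\mu_i\in SN$, i.e.\ $\mu_i\in\Omega$. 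By finite branching there are only finitely many simultaneous one-step reducts, so the common-level device collects them into a single $Cl^{K}(A)_\phi$, and the defining clause then gives $[\mu_i/\alpha_i]_i\,\nu\in Cl^{K+1}(A)_\phi\subseteq Cl(A)_\phi$, which completes the verification that $Cl(A)_\phi\in\cprime$.
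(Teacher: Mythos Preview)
Your proposal is correct and follows essentially the same approach as the paper: induction on the minimal level~$k$ for (CR$_2$), deferral of the genuinely technical (CR$_1$) to the standardization-style argument in \cite{Theseoim}, and an appeal to the defining clause for (CR$_3'$) after establishing $\mu_i\in\Omega$. The only noteworthy differences are cosmetic: the paper tracks levels more tightly in (CR$_2$) (showing directly that $\pi'\in Cl^k(A)_\phi$ rather than invoking your finite-branching ``common-level'' device), and for (CR$_3'$) the paper argues that $\mu_i\in SN$ because the whole term $[\mu_i/\alpha_i]_i\nu$ is in $SN$ (via a separate lemma in \cite{Theseoim}) rather than, as you do, by observing that each one-step reduct of $\mu_i$ occurs as a subterm of some simultaneous reduct already known to be in $SN$ by (CR$_1$).
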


\proof Let $A$ be a proposition and $\phi$ be an environment.
	\begin{iteMize}{(CR$_{2}$)}
		\item [(CR$_{2}$)]  Let $\pi \in Cl(A)_\phi$ and $\pi'$ a proof-term such that $\pi \ra \pi'$. 
			  Then there exists (a minimal) $k \in \N$ such that \mbox{$\pi \in Cl^k(A)_\phi$}.
			 By induction on $k$.
		\begin{iteMize}{-}
			\item If $k=0$, then $\Delta \vdash \pi :  \phi A$, therefore $\Delta \vdash \pi' :  \phi A$ by subject-reduction.
			\item If $k>0$,  then $\pi = [\mu_i/\alpha_i]_{i}\nu$ with each $\mu_i$ in $\Omega$, 
				and such that for all $(\rho_i)_{i}$  with 
				$(\mu_i)_i \ra (\rho_{i})_i$, we have $[\rho_i/\alpha_i]_{i}\nu \in Cl^{k-1}(A)_\phi$. 
				 Since each $\mu_i$ is neutral, the redex we reduce in pi is either in some $\mu_i$ or in $\nu$. 				  
				Thus,
				\smallskip
				
				\begin{iteMize}{--}
					\item Either $\pi' = [\rho_{i_0}/\alpha_{i_0}][\mu_i/\alpha_i]_{i \neq i_0}\nu$, with 
						$\mu_{i_0} \ra \rho_{i_0}$. In this case, $\pi'$ belongs to $Cl(A)_\phi$ (by considering the substitution $[\mu_i/\alpha_i]_{i \neq i_0}$ 
						 on the proof-term $[\rho_{i_0}/\alpha_{i_0}]\nu$).
						 \smallskip
					\item Or $\pi' = [\mu_i/\alpha_i]_{i}\nu$ with $\nu \ra \nu'$.Hence $[\rho_i/\alpha_i]_i\nu’ \in Cl^{k-1}(A)_\phi$ since
						$Cl^{k-1}(A)_\phi$ satisfies (CR$_2$) by induction hypothesis on $k$.
						And we conclude that $\pi' = [\mu_i/\alpha_i]_i\nu’ \in Cl^{k}(A)_\phi$.
				\end{iteMize}
				
		\end{iteMize}
		\smallskip
		\item [(CR$_{1}$)]  The fact that $Cl(A)_\phi$ only contains strongly normalizing proof-terms is not so hard to show but the proof (see \cite{Theseoim}) is still quite technical and long. The proof scheme is globally the same as the classical standardization theorem proof by Curry and Feys \cite{CurryCL}. It uses parallel reduction and defines, as in Curry and Feys proof, two kinds of reductions for a proof-term $[\mu_i/\alpha_i]_i\nu$: reductions in $\nu$ versus reductions in the $\mu_i$ (Curry and Feys distinguish head reductions from other reductions in their original proof). Notice finally that (CR$_1$) must be proved after (CR$_2$) since the latter is used in the proof of the former.

\smallskip
		\item [(CR$_3'$)] $Cl(A)_\phi$ satisfies (CR$_3'$) by construction, using the (CR$_1$) property and the fact that the (CR$_3'$) extension $[\mu_i/\alpha_i]_i\nu$ of a proof-term $[\rho_i/\alpha_i]_i\nu$ in SN is necessarily in SN hence so are the $\mu_i$ (see  \cite{Theseoim} for a detailed proof).\qed

	\end{iteMize}

\smallskip

\noindent Let us now prove an important property that is necessary to prove that our interpretation is adapted to both $\Ra$ and $\tfa$.

\begin{lemma}
\label{mink}\hfill\\
For all propositions $A$, environments $\phi$, and $\pi \in Cl(A)_\phi$, there exists $k \in \N$, less than or equal 
to the maximal length of a reductions sequence from $\pi$ such that $\pi \in Cl^k(A)_\phi$.
\linebreak In particular, if $\pi$ is in normal form then $\pi \in Cl^0(A)_\phi$ i.e. $\Delta \vdash \pi : \phi A$.
\end{lemma}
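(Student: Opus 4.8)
The plan is to prove the statement by induction on the minimal $k$ such that $\pi \in Cl^k(A)_\phi$, simultaneously tracking that this $k$ is bounded by the maximal length of a reduction sequence starting from $\pi$. Let me denote by $\ell(\pi)$ the maximal length of a $\beta$-reduction sequence from $\pi$ (finite since, by lemma~\ref{clc'} together with (CR$_1$), every element of $Cl(A)_\phi$ is strongly normalizing). First I would observe that since $\pi \in Cl(A)_\phi = \bigcup_{j} Cl^j(A)_\phi$, there is a well-defined minimal $k$ with $\pi \in Cl^k(A)_\phi$; the whole content of the lemma is the inequality $k \leq \ell(\pi)$, from which the ``in particular'' clause follows immediately, since $\ell(\pi) = 0$ exactly when $\pi$ is normal, forcing $k = 0$ and hence $\Delta \vdash \pi : \phi A$ by definition of $Cl^0$.

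The induction proceeds on $\ell(\pi)$. If $\ell(\pi) = 0$ then $\pi$ is in normal form. Here the key step is to argue that a normal proof-term cannot genuinely require a positive closure level: if $\pi \in Cl^{m+1}(A)_\phi$ with minimal $m+1 > 0$, then $\pi = [\mu_i/\alpha_i]_i \nu_\pi$ with each $\mu_i \in \Omega$, so each $\mu_i$ is neutral and \emph{not normal}, hence reducible. But then $\pi$ itself would be reducible (a redex inside some $\mu_i$ survives the substitution, since $\mu_i$ is neutral and its occurrences are not erased or duplicated in a way that removes the redex), contradicting normality. Therefore a normal $\pi \in Cl(A)_\phi$ must lie in $Cl^0(A)_\phi$, giving $k = 0 = \ell(\pi)$.

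For the inductive step, suppose $\ell(\pi) = N > 0$ and let $k$ be minimal with $\pi \in Cl^k(A)_\phi$; I must show $k \leq N$. If $k = 0$ we are done, so assume $k > 0$, whence $\pi = [\mu_i/\alpha_i]_i \nu_\pi$ with $(\mu_i)_i \subseteq \Omega$ and for every $(\rho_i)_i$ with $(\mu_i)_i \ra (\rho_i)_i$ we have $[\rho_i/\alpha_i]_i \nu_\pi \in Cl^{k-1}(A)_\phi$. Since each $\mu_i$ is not normal, I may pick such a one-step reduction $(\mu_i)_i \ra (\rho_i)_i$ (reducing exactly one $\mu_{i_0}$, leaving the rest fixed), and the resulting term $\pi' := [\rho_i/\alpha_i]_i \nu_\pi$ is a one-step reduct of $\pi$, so $\ell(\pi') \leq N - 1$. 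Now $\pi' \in Cl^{k-1}(A)_\phi$, and I would apply the induction hypothesis to $\pi'$: its minimal closure level $k'$ satisfies $k' \leq \ell(\pi') \leq N - 1$. The delicate point, and the main obstacle, is relating the minimal level of $\pi'$ back to $k$: I must show that $k-1$ is itself the minimal level for $\pi'$, or at least that $\pi'$ cannot sit at a level lower than $k-1$ without contradicting minimality of $k$ for $\pi$. The cleanest route is to establish the monotonicity/consistency claim that if \emph{every} one-step reduct $\pi'$ of $\pi$ lies in $Cl^{k-1}(A)_\phi$ but some reduct requires exactly level $k-1$, then $\pi$ requires exactly level $k$; combined with the inequality for $\pi'$ this yields $k = k' + 1 \leq \ell(\pi') + 1 \leq N$, closing the induction.

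The hard part will be handling this interaction between the minimality of the closure level and the reduction structure, because a single reduction step from $\pi$ may reduce the required closure level by more than one (if reducing one $\mu_{i_0}$ turns it normal and thereby collapses part of the $(\mathrm{CR}_3')$-expansion), so one must verify that the \emph{worst-case} reduct—the one realizing the maximal reduction length—still carries closure level at least $k-1$, which is exactly what forces $k \leq \ell(\pi)$ rather than a weaker bound. I expect this to require a careful inspection of how the definition of $Cl^{k+1}$ quantifies universally over all reducts $(\rho_i)_i$, ensuring that the level drops by exactly one along any maximal reduction path.
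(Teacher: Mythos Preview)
Your base case is fine, but the inductive step contains a concrete error and, as you yourself sense, the overall approach is the wrong one.

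\medskip

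\noindent\textbf{The error.} You write that by reducing ``exactly one $\mu_{i_0}$, leaving the rest fixed'' you obtain a one-step reduct $\pi'$ with $\pi' \in Cl^{k-1}(A)_\phi$. This is not what the definition of $Cl^k$ gives you. The clause defining $Cl^k$ says that $[\rho_i/\alpha_i]_i\nu_\pi \in Cl^{k-1}(A)_\phi$ only when \emph{every} $\mu_i$ is reduced, i.e.\ when $(\mu_i)_i \ra (\rho_i)_i$ means $\mu_i \ra \rho_i$ for all~$i$. If you reduce only $\mu_{i_0}$ and keep the other $\mu_i$ intact, the resulting term $\pi'$ is a one-step reduct of $\pi$, but the definition only tells you that $\pi'$ lies again in $Cl^k(A)_\phi$ (via the remaining $(n{-}1)$-fold substitution), not in $Cl^{k-1}(A)_\phi$. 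So your induction on $k$ never gets off the ground. Conversely, if you reduce all the $\mu_i$ simultaneously you do land in $Cl^{k-1}(A)_\phi$, but the result is no longer a single-step reduct of~$\pi$, so you lose control of $\ell(\pi')$. This is why the ``hard part'' you describe in your last paragraph resists the analysis you sketch: there is no reason the worst-case reduct should have minimal level exactly $k-1$, and you give no argument for it.

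\medskip

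\noindent\textbf{What the paper does instead.} The paper's proof also inducts on $m = \ell(\pi)$, but it \emph{never uses} the decomposition that witnesses $\pi \in Cl^k(A)_\phi$ for the given minimal~$k$. Instead, for $m>0$ it picks an arbitrary redex $\mu$ occurring in $\pi$, writes $\pi = [\mu/\alpha]\nu$ with $n=1$ (so $\alpha$ occurs exactly once in~$\nu$), and observes that $\mu$ is neutral, not normal, and in $SN$. For every $\rho$ with $\mu \ra \rho$, the term $[\rho/\alpha]\nu$ is a genuine one-step reduct of~$\pi$, hence lies in $Cl(A)_\phi$ by (CR$_2$) and has $\ell \le m-1$; by the induction hypothesis each such reduct lies in some $Cl^{k_\rho}(A)_\phi$ with $k_\rho \le m-1$. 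Taking $k = \max_\rho k_\rho$ (finitely many $\rho$) and using monotonicity of the $Cl^j$, one gets $\pi \in Cl^{k+1}(A)_\phi$ with $k+1 \le m$ directly from the defining clause of $Cl^{k+1}$. No minimality tracking, no ``worst-case reduct'' analysis: the fresh $n=1$ decomposition does all the work.
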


\begin{proof}
	By induction on the maximal length $m$ of a reductions sequence from $\pi$. If $m=0$ then $\pi$ is in normal form so it is necessarily in $Cl^0(A)_\phi$, since the proof-terms that belong to $Cl^{k+1}(A)_\phi$ but not to $Cl^k(A)_\phi$ cannot be in normal form. If $m>0$, let $\mu$ be some subterm of $\pi$ that is a redex. 
	$\mu$ is neutral, not normal and strongly normalizing (since $\pi \in SN$).
	Let $\pi'$ be a proof-term obtained by reducing a redex of $\pi$ that is also a redex of $\mu$, and $\rho$ be the proof-term obtained by reducing that same redex in $\mu$. 
	 Let us write $\nu$ the proof-term obtained by replacing the redex $\mu$ by the proof-variable $\alpha$ in $\pi$. We have $\pi = [\mu/\alpha]\nu \ra [\rho/\alpha]\nu = \pi'$. The maximal length of a reductions sequence from $\pi'$ is less than or equal to $m-1$, then there exists, by induction hypothesis, $k \leq m-1$ such that $\pi' =  [\rho/\alpha]\nu \in Cl^{k}(A)_\phi$.  Thus $\pi \in Cl^{k+1}(A)_\phi$ with $k+1 \leq m$.
  \end{proof}

\medskip

\noindent In the following, we prove that $Cl(.)_.$ forms an interpretation that is adapted to the connective $\Ra$. We first prove a useful lemma (as explained in section \ref{otcr3p}).

\begin{lemma}
\label{lambdacl}\hfill\\
	For all propositions $A$, $B$ and environments $\phi$, 
	proof-terms $\pi$ and proof-variables $\alpha,\beta$ such that $\alpha$ does not occur free in $\pi$,
	\\if $\Delta \vdash \alpha : \phi A$ and $(\alpha/\beta)\pi \in Cl(B)_\phi$ 
	then $\lambda \beta.\pi \in Cl(A \Ra B)_\phi$.
\end{lemma}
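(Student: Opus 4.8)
The plan is to exploit the filtration $Cl(B)_\phi=\bigcup_{k}Cl^k(B)_\phi$ and argue by induction on a rank. Since $(\alpha/\beta)\pi\in Cl(B)_\phi$, there is a least $k\in\N$ with $(\alpha/\beta)\pi\in Cl^k(B)_\phi$; by Lemma~\ref{mink} such a $k$ exists and is bounded by the maximal reduction length of $(\alpha/\beta)\pi$. I would in fact prove the sharper statement, by induction on $k$: \emph{if $(\alpha/\beta)\pi\in Cl^k(B)_\phi$ (with $\alpha\notin FV(\pi)$ and $\Delta\vdash\alpha:\phi A$) then $\lambda\beta.\pi\in Cl^k(A\Ra B)_\phi$}. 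This yields the lemma since $Cl^k(A\Ra B)_\phi\subseteq Cl(A\Ra B)_\phi$. Throughout I use that $\phi(A\Ra B)=\phi A\Ra\phi B$ and that $(\beta/\alpha)(\alpha/\beta)\pi=\pi$, which holds precisely because $\alpha\notin FV(\pi)$.

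For the base case $k=0$ we have $\Delta\vdash(\alpha/\beta)\pi:\phi B$, and the goal is the typing judgement $\Delta\vdash\lambda\beta.\pi:\phi A\Ra\phi B$, i.e.\ $\lambda\beta.\pi\in Cl^0(A\Ra B)_\phi$. The idea is to recover a derivation of $\pi$ in which $\beta$ carries type $\phi A$, and then apply $\Ra$-intro. Taking $\beta$ fresh for $\Delta$ by $\alpha$-conversion and weakening (Lemma~\ref{l:Weakening}) to the context $\Delta,\beta:\phi A$, I substitute the proof-variable $\beta$ for $\alpha$ via Lemma~\ref{l:Substitutivity}(1), using $\Delta\vdash\alpha:\phi A$; since $(\beta/\alpha)(\alpha/\beta)\pi=\pi$, this produces a derivation $\Delta_0,\beta:\phi A\vdash\pi:\phi B$ with $\beta\notin\Delta_0$. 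The $\Ra$-intro rule then gives $\Delta_0\vdash\lambda\beta.\pi:\phi A\Ra\phi B$, and weakening back to $\Delta$ closes the case.

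For the inductive step $k>0$, write $(\alpha/\beta)\pi=[\mu_i/\alpha_i]_{i\le n}\nu$ with $\mu_i\in\Omega$ and the $\alpha_i$ chosen fresh (distinct from $\alpha,\beta$), so that the defining clause of $Cl^k$ gives $[\rho_i/\alpha_i]_i\nu\in Cl^{k-1}(B)_\phi$ for every $(\rho_i)_i$ with $(\mu_i)_i\ra(\rho_i)_i$. Renaming by $(\beta/\alpha)$ I set $\hat\mu_i=(\beta/\alpha)\mu_i$ and $\hat\nu=(\beta/\alpha)\nu$, so that $\hat\mu_i\in\Omega$ (renaming a free variable preserves neutrality, non-normality and strong normalization) and, crucially, $\lambda\beta.\pi=[\hat\mu_i/\alpha_i]_i(\lambda\beta.\hat\nu)$ where the outer substitution is the \emph{substitution with capture} of the definition of $Cl$: pulling $\lambda\beta$ outside is legitimate exactly because any free $\beta$ created inside $\hat\mu_i$ by the renaming is recaptured. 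It then remains to check the $Cl^k$ clause for this decomposition. Given $\hat\mu_i\ra\hat\rho_i$, I set $\rho_i=(\alpha/\beta)\hat\rho_i$, so $(\mu_i)_i\ra(\rho_i)_i$ and hence $\tau:=[\rho_i/\alpha_i]_i\nu\in Cl^{k-1}(B)_\phi$. A short commutation computation shows $[\hat\rho_i/\alpha_i]_i(\lambda\beta.\hat\nu)=\lambda\beta.(\beta/\alpha)\tau$; applying the induction hypothesis at rank $k-1$ to $\pi':=(\beta/\alpha)\tau$ (which satisfies $\alpha\notin FV(\pi')$ and $(\alpha/\beta)\pi'=\tau$, since $\beta\notin FV(\tau)$) yields $\lambda\beta.\pi'\in Cl^{k-1}(A\Ra B)_\phi$; that is, every term $[\hat\rho_i/\alpha_i]_i(\lambda\beta.\hat\nu)$ with $(\hat\mu_i)_i\ra(\hat\rho_i)_i$ lies in $Cl^{k-1}(A\Ra B)_\phi$. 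By the definition of $Cl^k$ we conclude $\lambda\beta.\pi\in Cl^k(A\Ra B)_\phi$.

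The main obstacle is precisely this bookkeeping of substitutions \emph{with capture} together with the two-way renaming $\alpha\leftrightarrow\beta$: one must verify that every reduction of a redex $\mu_i$ inside $(\alpha/\beta)\pi$ corresponds, under $(\beta/\alpha)$, to a reduction of $\hat\mu_i$ inside $\lambda\beta.\pi$ and conversely, and that no free occurrence of $\alpha$ or $\beta$ is silently created or destroyed when commuting the renaming past the (capturing) substitution $[\cdot/\alpha_i]_i$ and past the binder $\lambda\beta$. The freshness conventions on the $\alpha_i$ and the hypothesis $\alpha\notin FV(\pi)$ are exactly what make all these identities hold; the typing-level renaming in the base case is comparatively routine, resting only on Lemmas~\ref{l:Weakening} and~\ref{l:Substitutivity}.
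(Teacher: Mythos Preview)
Your proof is correct and follows essentially the same approach as the paper: induction on the minimal rank $k$ with $(\alpha/\beta)\pi\in Cl^k(B)_\phi$, the same $(\beta/\alpha)$-renaming of the decomposition $[\mu_i/\alpha_i]_i\nu$, and the same commutation of $\lambda\beta$ with the capturing substitution. The only difference is that you prove the sharper rank-preserving statement $\lambda\beta.\pi\in Cl^k(A\Ra B)_\phi$ and therefore close the inductive step directly via the defining clause of $Cl^k$, whereas the paper only concludes $\lambda\beta.([\rho_i'/\alpha_i]_i\nu')\in Cl(A\Ra B)_\phi$ from the induction hypothesis and then invokes the (CR$_3'$) property of $Cl(A\Ra B)_\phi$ to absorb the outer $[\mu_i'/\alpha_i]_i$; your formulation is marginally cleaner but the argument is the same.
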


\proof There exists a minimal $k$ such that \mbox{$(\alpha/\beta)\pi \in Cl^k(B)_\phi$.} By induction on $k$.
\begin{iteMize}{$\bullet$}
	\item if $k=0$, then $\Delta \vdash (\alpha/\beta) \pi : \phi B$, hence $\Delta \vdash  \lambda \beta.\pi : \phi(A \Ra B)$ and  we conclude that $\lambda \beta.\pi \in Cl^0(A \Ra B)_\phi$.
	
	\item if $k>0$, then $(\alpha/\beta)\pi = [\mu_i/\alpha_i]_{i}\nu$ with each $\mu_i \in \Omega$ and such that for all $(\rho_i)_{i }$ with $(\mu_i)_{i} \ra (\rho_i)_{i}$, we have
			$[\rho_i/\alpha_i]_{i}\nu \in Cl^{k-1}(B)_\phi$.  (H)
			\\ Let us write $\nu' = (\beta/\alpha)\nu$ and $\mu_i'=(\beta/\alpha)\mu_i$ for each $i$.  
			\\We have $\pi = (\beta/\alpha)(\alpha/\beta)\pi = (\beta/\alpha)([\mu_i/\alpha_i]_i\nu) = [(\beta/\alpha)\mu_i/\alpha_i]_i((\beta/\alpha)\nu) = [\mu_i'/\alpha_i]_i\nu'$ since $\alpha$ is not free in $\pi$. 
			\\Hence $\lambda \beta.\pi = \lambda \beta. ([\mu_i'/\alpha_i]_i\nu') = [\mu_i'/\alpha_i]_i(\lambda\beta.\nu')$ ($[./.]$ is substitution {\em with} capture). 
			\\ We can notice that the $\mu_i'$ belong to $\Omega$ since the $\mu_i$ do.
			\\Let $(\rho_i')_i$ such that $\mu_i' \ra \rho_i'$ for each $i$. 
			\\Notice that we have $(\alpha/\beta)\nu' = (\alpha/\beta)(\beta/\alpha)\nu = \nu$ and $(\alpha/\beta)\mu_i' = (\alpha/\beta)(\beta/\alpha)\mu_i = \mu_i$ for each $i$, since $\beta$ is not free in $(\alpha/\beta)\pi = [\mu_i/\alpha_i]_{i}\nu$. 
			\\Hence for each $i$, $\mu_i = (\alpha/\beta)\mu_i'  \ra (\alpha/\beta)\rho_i'$, thus \mbox{$[(\alpha/\beta)\rho_i'/\alpha_i]_{i}\nu \in Cl^{k-1}(B)_\phi$ by (H). }
			\\ Since $\nu = (\alpha/\beta)\nu'$, we have $(\alpha/\beta)([\rho_i'/\alpha_i]_{i}\nu') \in Cl^{k-1}(B)_\phi$ thus $\lambda \beta.([\rho_i'/\alpha_i]_{i}\nu')$ belongs to $Cl(A\Ra B)_\phi$ by induction hypothesis.
			\\ Finally, $[\rho_i'/\alpha_i]_i(\lambda\beta.\nu') \in Cl(A \Ra B)_\phi$ for all $(\rho_i)_i$ family of respective reducts of the $\mu_i$. Hence $\lambda\beta.\pi =  [\mu_i'/\alpha_i]_i(\lambda\beta.\nu') \in Cl(A \Ra B)_\phi$ since that set satisfies (CR$_3'$).\qed
	
%
%
%
%
\end{iteMize}\medskip

\begin{lemma} \hfill

\label{clramorph}
	For all propositions $A,B$ and environments $\phi$,  $Cl(A \Ra B)_\phi = Cl(A)_\phi \tra Cl(B)_\phi$.
\end{lemma}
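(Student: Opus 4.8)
The plan is to prove the statement by the two inclusions separately, after unfolding $a \tra b = \{\pi \mid \forall \mu \in a,\ \pi\mu \in b\}$. Throughout I use that, under the standing hypothesis of strong normalization, every $Cl(C)_\phi$ belongs to $\cprime$ (Lemma~\ref{clc'}), so it enjoys (CR$_1$), (CR$_2$) and (CR$_3'$), and that by Lemma~\ref{mink} each of its elements already lives at some finite level $Cl^k(C)_\phi$.

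For $Cl(A\Ra B)_\phi \subseteq Cl(A)_\phi \tra Cl(B)_\phi$ I would fix $\pi \in Cl(A\Ra B)_\phi$ and $\mu \in Cl(A)_\phi$ and prove $\pi\mu \in Cl(B)_\phi$ by a double induction: an outer induction on the least $k$ with $\pi \in Cl^k(A\Ra B)_\phi$, and, in the base case $k=0$, an inner induction on the least $j$ with $\mu \in Cl^j(A)_\phi$. The decisive observation is that $Cl^0$ contains \emph{proofs}, not merely normal forms: when $k=j=0$ we have $\Delta \vdash \pi : \phi A \Ra \phi B$ and $\Delta \vdash \mu : \phi A$, so a single forward use of $\Ra$-elim gives $\Delta \vdash \pi\mu : \phi B$, whence $\pi\mu \in Cl^0(B)_\phi$ directly; in particular the head $\beta$-redex needs no special treatment. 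For $j>0$ I write $\mu = [\sigma_i/\gamma_i]_i\kappa$ with the $\sigma_i \in \Omega$ and the $\gamma_i$ renamed fresh for $\pi$, so that $\pi\mu = [\sigma_i/\gamma_i]_i(\pi\kappa)$; each simultaneous one-step reduct of this expansion is of the form $\pi\mu'$ with $\mu' \in Cl^{j-1}(A)_\phi$, hence in $Cl(B)_\phi$ by the inner hypothesis, so (CR$_3'$) for $Cl(B)_\phi$ yields $\pi\mu \in Cl(B)_\phi$. The outer step $k>0$ is symmetric: writing $\pi=[\sigma_i/\gamma_i]_i\nu$ one gets $\pi\mu=[\sigma_i/\gamma_i]_i(\nu\mu)$, whose reducts are $\pi'\mu$ with $\pi'\in Cl^{k-1}(A\Ra B)_\phi$, and one closes again under (CR$_3'$).

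For the converse $Cl(A)_\phi \tra Cl(B)_\phi \subseteq Cl(A\Ra B)_\phi$ I would take $\pi$ with $\pi\mu \in Cl(B)_\phi$ for all $\mu \in Cl(A)_\phi$. Since $\Delta$ declares each proposition infinitely often, I pick a fresh proof-variable $\alpha$ with $\Delta \vdash \alpha:\phi A$; then $\alpha \in Cl^0(A)_\phi$, so $\pi\alpha \in Cl(B)_\phi$, and in particular $\pi \in SN$ by (CR$_1$). I then induct on the maximal reduction length of $\pi$, by cases on its shape. If $\pi = \lambda\beta.\tau$, then $\pi\alpha \ra (\alpha/\beta)\tau$, so $(\alpha/\beta)\tau \in Cl(B)_\phi$ by (CR$_2$), and Lemma~\ref{lambdacl} gives $\lambda\beta.\tau \in Cl(A\Ra B)_\phi$ at once. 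If $\pi$ is neutral and not normal, every one-step reduct $\pi'$ still satisfies the hypothesis (from $\pi\mu \ra \pi'\mu$ and (CR$_2$)), hence $\pi' \in Cl(A\Ra B)_\phi$ by the induction hypothesis, and the $n=1$ instance of (CR$_3'$) lets me conclude $\pi \in Cl(A\Ra B)_\phi$.

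The remaining case, $\pi$ neutral and \emph{normal}, is the only delicate one and is where I expect the main obstacle to lie. Here $\pi\alpha$ is itself normal, so $\pi\alpha \in Cl^0(B)_\phi$ by Lemma~\ref{mink}, i.e.\ $\Delta \vdash \pi\alpha : \phi B$, and I must recover $\Delta \vdash \pi : \phi(A\Ra B)$ in order to place $\pi$ in $Cl^0(A\Ra B)_\phi$. This is an inversion (generation) argument on the typing of the application $\pi\alpha$, and it is the single point where the combinatorics of the derivation, rather than the closure machinery, must be controlled. The Curry presentation is precisely what makes it tractable: the $\forall$-rules leave the proof-term untouched and the congruence is absorbed at every rule, so the arrow type of $\pi$ can be read back from that of $\pi\alpha$; this is also the place where one must verify that nothing breaks for confusing congruences.
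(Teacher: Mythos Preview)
Your proposal is correct and follows essentially the same approach as the paper. The only organizational differences are minor: for the $\subseteq$ direction you use a nested induction (outer on $k$, inner on $j$) while the paper uses a single induction on $k+j$; for the $\supseteq$ direction you wrap the case analysis in an explicit induction on the maximal reduction length of $\pi$, whereas the paper handles the neutral-not-normal case more informally by ``repeatedly using (CR$_3'$)''. Your identification of the neutral-normal case as the delicate one requiring typing inversion on $\Delta \vdash \pi\alpha : \phi B$ is exactly right; the paper makes precisely the same step but asserts $\Delta \vdash \pi : \phi(A\Ra B)$ without further comment.
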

	
\proof \hfill

\begin{iteMize}{$\subseteq$}
\item[$\subseteq$]  Let $\pi \in Cl(A \Ra B)_\phi$,  
				 \\then $\pi \in SN$ by (CR$_1$). Moreover there exists  (a minimal) $ k \in \N$, such that \mbox{$\pi \in Cl^k(A \Ra B)_\phi$}.
				 Let $\pi' \in Cl(A)_\phi$, then there exists (a minimal) $ j \in \N$, such that $\pi' \in Cl^j(A)_\phi$. Let us show that $\pi \pi' \in Cl(B)_\phi$ 
				   by induction on $k+j$.
				\begin{iteMize}{$-$}
		 			 \item If $k+j=0$, then $\Delta \vdash \pi : \phi(A \Ra B)$ and $\Delta \vdash \pi' : \phi A$ hence $\Delta \vdash \pi \pi' : \phi B$ and $\pi\pi' \in Cl^0(B)_\phi$.
				   	  \item If $k>0$,  then there exists $\nu_\pi$, and $(\mu_i)_{i \leq n} \subseteq \Omega$, such that  $\pi = [\mu_i/\alpha_i]_{i}~\nu_\pi$ and for all
							  $(\rho_i)_{i} $ with
							  $(\mu_i)_i \ra (\rho_i)_i$,  we have \mbox{$[\rho_i/\alpha_i]_{i}~\nu_\pi \in Cl^{k-1}(A \Ra B)_\phi$}. 
							Therefore \linebreak $[\rho_i/\alpha_i]_{i}~(\nu_\pi~ \pi') =  [\rho_i/\alpha_i]_{i}~\nu_\pi ~\pi' \in Cl(B)_\phi$ by induction hypothesis. 
						   	Hence $\pi \pi' \in Cl(B)_\phi$ since it satisfies (CR$_3'$).
					 \item If $j>0$,  then there exists $\nu_{\pi'}$, and $(\mu_i)_{i\leq n} \subseteq \Omega$, such that 
					 		 $\pi' = [\mu_i/\alpha_i]_{i}~\nu_{\pi'} $ and for all $(\rho_i)_{i}$, with
							  $(\mu_i)_i \ra (\rho_i)_i$, we have \mbox{$[\rho_i/\alpha_i]_{i}~\nu_{\pi'} \in Cl^{j-1}(A)_\phi$}. 
							  \\Therefore $[\rho_i/\alpha_i]_{i}~(\pi~ \nu_{\pi'}) =  [\rho_i/\alpha_i]_{i}~\pi ~\nu_{\pi'} \in Cl(B)_\phi$ by induction hypothesis. 
						   Hence $\pi \pi' \in Cl(B)_\phi$ since it satisfies (CR$_3'$).
				  \end{iteMize}
		 \medskip
		 
		\item[$\supseteq$]  Let $\pi \in Cl(A)_\phi \tra Cl(B)_\phi$,
				then $\pi \in SN$ and for all $\pi' \in Cl(A)_\phi$, $\pi \pi' \in Cl(B)_\phi$.
				\begin{iteMize}{$-$}
					\item If $\pi$ is a proof-abstraction $\lambda \beta.\pi'$, let $\alpha$ be a proof-variable, not free in $\pi'$, such that $\Delta \vdash \alpha : \phi A$, then 
						$(\lambda \beta.\pi') \alpha \in Cl(B)_\phi$ and so does $(\alpha/\beta)\pi'$, by (CR$_2$).  Therefore $\pi$ belongs to  $Cl(A\Ra B)_\phi$ by lemma 
						\ref{lambdacl}.
					\item If $\pi$ is neutral and normal, let $\alpha$ be a proof-variable such that $\Delta \vdash \alpha : \phi A$, then $\pi \alpha \in Cl(B)_\phi$. 
						Moreover $\pi$ is neutral and normal, therefore $\pi \alpha$ is normal, hence, by lemma \ref{mink}, $\pi \alpha \in Cl^0(B)_\phi$, {\em i.e.} 
						$\Delta \vdash \pi \alpha : \phi B$, and $\Delta \vdash \pi : \phi (A \Ra B)$, thus $\pi \in Cl^0(A\Ra B)_\phi \subset Cl(A\Ra B)_\phi$. 
					
					\item Otherwise, $\pi \in SN$, is neutral and not normal. All its neutral normal and not neutral (more than one-step-) reducts belong to 
						 $Cl(A)_\phi \tra Cl(B)_\phi$  by (CR$_2$) and therefore to $Cl(A \Ra B)_\phi$ by the previous points. 
						 By repeatedly using the (CR$_3'$) property, we conclude that $\pi$ also belongs to $Cl(A \Ra B)_\phi$.\qed\smallskip
				\end{iteMize}
	\end{iteMize}

\noindent Now that we have proved that $Cl(.)_.$ has its values in $\cprime$ (when the considered theory is strongly normalizing) and that it is adapted to the connective $\Ra$, the last properties needed to prove that $Cl(.)_.$ is a model are first the substitution property and second, the fact that it is also adapted to the connective $\fa$. This latter property becomes simpler to prove when considering theories expressed in minimal deduction modulo \`a la Curry since $\tfa$ is now classical set-inclusion.

\begin{lemma}[Substitution]
\label{clsubst}\hfill\\
For all propositions $A$ term-variables $x$, terms $t$ and environments $\phi$,  \linebreak $Cl((t/x)A)_\phi = Cl(A)_{\phi + \langle x,t \rangle}$.
\end{lemma}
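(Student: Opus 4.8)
The plan is to prove the two sets equal level by level in the closure construction, i.e.\ to show by induction on $k \in \N$ that $Cl^k((t/x)A)_\phi = Cl^k(A)_{\phi + \langle x,t\rangle}$ for every $k$, and then to conclude by taking the union over $k$. This strategy works because, by its very definition, $Cl(\cdot)_\cdot$ depends on its proposition and environment arguments only through the base set $Cl^0$: all the higher levels $Cl^{k+1}$ are obtained from $Cl^k$ by one and the same \emph{uniform} operation that does not otherwise mention the proposition or the environment. Hence the whole statement reduces to an identity between the two base sets.

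First I would settle the base case $k=0$. Here $Cl^0((t/x)A)_\phi = \{\pi : \Delta \vdash \pi : \phi((t/x)A)\}$ and $Cl^0(A)_{\phi+\langle x,t\rangle} = \{\pi : \Delta \vdash \pi : (\phi+\langle x,t\rangle)A\}$, so it suffices to check that the two propositions $\phi((t/x)A)$ and $(\phi + \langle x,t\rangle)A$ coincide. This is exactly the syntactic substitution-composition property for propositions, the analogue at the level of raw substitutions of Lemma~\ref{lsub}: performing the term-substitution $\phi$ after replacing $x$ by $t$ yields the very same proposition as applying the extended substitution $\phi + \langle x, t\rangle$ in one step. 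Since the two propositions are literally equal, their two sets of $\Delta$-proofs are equal, which gives the base case.

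Next I would handle the inductive step. Assume $Cl^k((t/x)A)_\phi = Cl^k(A)_{\phi+\langle x,t\rangle}$. The definition of $Cl^{k+1}(P)_\psi$ describes its elements purely as those proof-terms of the form $[\mu_i/\alpha_i]_{i\leq n}\,\nu$ with $(\mu_i)_{i\leq n}\subseteq\Omega$ all of whose reduct-families $[\rho_i/\alpha_i]_{i\leq n}\,\nu$ (for $(\mu_i)_i \ra (\rho_i)_i$) lie in $Cl^k(P)_\psi$; it mentions $P$ and $\psi$ \emph{only} through the set $Cl^k(P)_\psi$. Consequently two equal sets $Cl^k(\cdot)_\cdot$ produce, by the same condition, two equal sets $Cl^{k+1}(\cdot)_\cdot$, so $Cl^{k+1}((t/x)A)_\phi = Cl^{k+1}(A)_{\phi+\langle x,t\rangle}$. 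Taking unions then gives $Cl((t/x)A)_\phi = \bigcup_{k} Cl^k((t/x)A)_\phi = \bigcup_{k} Cl^k(A)_{\phi+\langle x,t\rangle} = Cl(A)_{\phi+\langle x,t\rangle}$, as required.

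The only genuine content sits in the base case, and the single point that needs care there is the substitution-composition identity: one must check capture-avoidance and, in particular, that $x$ is not among the variables moved by $\phi$ (which can be arranged by $\alpha$-renaming, exactly as in the proof of Lemma~\ref{lsub}), so that $\phi\circ(t/x)$ and $\phi+\langle x,t\rangle$ act identically on $A$. Once that purely syntactic identity is granted, everything else is the routine uniformity induction above, which is precisely why this lemma is so much shorter here than its Church-style counterpart.
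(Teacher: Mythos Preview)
Your proposal is correct and follows essentially the same approach as the paper: the paper also establishes $Cl^0((t/x)A)_\phi = Cl^0(A)_{\phi+\langle x,t\rangle}$ via the syntactic identity $\phi((t/x)A) = (\phi+\langle x,t\rangle)A$ (phrased there as ``term-substitutions commute''), and then concludes directly that the full closures coincide. You have simply made explicit the level-by-level induction that the paper compresses into the final clause ``Hence \ldots\ and $Cl((t/x)A)_\phi = Cl(A)_{\phi+\langle x,t\rangle}$''.
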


\proof 
$\pi \in Cl^0((t/x)A)_\phi~$ if and only if $~\Delta \vdash \pi : \phi(t/x)A = (t/x)\phi A = (\phi + \langle x,t \rangle) A$ \linebreak if and only if $\pi \in Cl^0(A)_{\phi + \langle x,t \rangle}$
(notice that term-substitutions commute).
\\ Hence $Cl^0((t/x)A)_\phi = Cl^0(A)_{\phi + \langle x,t \rangle}$ and  $Cl((t/x)A)_\phi = Cl(A)_{\phi + \langle x,t \rangle}$.
\qed

\medskip

\begin{lemma}
\label{clfamorph}\hfill\\
For all propositions $A$ term-variables $x$ and environments $\phi$,  
\\ $Cl(\forall x.A)_\phi = \tfa ~\{ Cl(A)_{\phi + \langle x,t \rangle}$, for $t$ term$\}$.
\end{lemma}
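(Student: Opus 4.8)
The plan is to unfold the two sides and reduce the statement to a single, transparent identity. By the definition of $\tfa$ in $\cprime$ (Definition \ref{accr}), the right-hand side is simply the intersection $\bigcap_{t}Cl(A)_{\phi+\langle x,t\rangle}$, and by the substitution property (Lemma \ref{clsubst}) this equals $\bigcap_{t}Cl((t/x)A)_\phi$; assuming as usual that $x$ is not captured by $\phi$, so that $\phi(\forall x.A)=\forall x.\phi A$ and $(t/x)\phi A=\phi((t/x)A)$, the goal becomes
$$Cl(\forall x.A)_\phi \;=\; \bigcap_{t}Cl((t/x)A)_\phi .$$
The key remark is that the (CR$_3'$)-expansion mechanism defining $Cl^{k+1}$ from $Cl^{k}$ is identical for every proposition: only the base level $Cl^{0}$ records the actual typing. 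Hence the whole content of the lemma is concentrated in the level-$0$ identity $\{\pi:\Delta\vdash\pi:\forall x.\phi A\}=\bigcap_{t}\{\pi:\Delta\vdash\pi:(t/x)\phi A\}$, which is exactly the Curry-style $\forall$-intro/$\forall$-elim correspondence, the expansions being carried along for free.

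For the inclusion $\subseteq$ I would fix a term $t$ and argue by induction on the least $k$ with $\pi\in Cl^{k}(\forall x.A)_\phi$. When $k=0$ we have $\Delta\vdash\pi:\forall x.\phi A$, and the Curry $\forall$-elim rule gives $\Delta\vdash\pi:(t/x)\phi A$, i.e.\ $\pi\in Cl^{0}((t/x)A)_\phi$. When $k>0$ the witness $\pi=[\mu_i/\alpha_i]_{i}\nu_\pi$ (with $\mu_i\in\Omega$) for $\forall x.A$ serves verbatim for $(t/x)A$: each reduced instance $[\rho_i/\alpha_i]_{i}\nu_\pi$ lies in $Cl^{k-1}(\forall x.A)_\phi$, hence in $Cl((t/x)A)_\phi$ by the induction hypothesis; since the families $(\rho_i)_i$ of one-step reducts are finitely many, monotonicity of the $Cl^{j}$ yields a common level $j$, so $\pi\in Cl^{j+1}((t/x)A)_\phi$. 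As $t$ is arbitrary, $\pi$ lies in the intersection.

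For $\supseteq$, let $\pi\in\bigcap_t Cl((t/x)A)_\phi$; then $\pi\in SN$ by (CR$_1$), and if $m$ is the maximal length of a reduction sequence from $\pi$, Lemma \ref{mink} gives the \emph{uniform} bound $\pi\in Cl^{m}((t/x)A)_\phi$ for every $t$. I would then induct on $m$. If $\pi$ is neutral and not normal, I would apply (CR$_3'$) --- legitimate because $Cl(\forall x.A)_\phi\in\cprime$ by Lemma \ref{clc'} --- to the trivial decomposition $\pi=[\pi/\alpha]\alpha$ with $\pi\in\Omega$: each one-step reduct of $\pi$ stays in the intersection by (CR$_2$), has length $<m$, hence belongs to $Cl(\forall x.A)_\phi$ by the induction hypothesis, so $\pi$ does too. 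If $\pi$ is normal, then by Lemma \ref{mink} $\pi\in Cl^{0}((t/x)A)_\phi$ for all $t$, i.e.\ $\Delta\vdash\pi:(t/x)\phi A$ for all $t$, and it remains to perform a single $\forall$-introduction to obtain $\Delta\vdash\pi:\forall x.\phi A$. The case where $\pi$ is an abstraction with non-normal body is then reduced to the two previous ones exactly in the spirit of Lemma \ref{lambdacl}, by pushing the leading $\lambda$ through the (CR$_3'$)-decomposition.

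The hard part will be this last $\forall$-introduction step, that is, the $\supseteq$ part of the level-$0$ identity. The difficulty is the eigenvariable side condition: one must turn the family of typings $\Delta\vdash\pi:(t/x)\phi A$ into a single proof of $\forall x.\phi A$, and the universal context $\Delta$ has \emph{every} term-variable free, so $x$ cannot be assumed fresh for $\Delta$ itself. The remedy I would use is to pass to the finite subcontext $\Gamma\subseteq\Delta$ actually occurring at the axiom leaves of one derivation, observe that $FV(\Gamma)$ is finite, choose a term-variable $z\notin FV(\Gamma)$, and combine the substitutivity lemma (Lemma \ref{l:Substitutivity}, Curry case) with the instance for $t:=z$ so that $\forall$-intro applies with its side condition satisfied; an $\alpha$-renaming and a weakening (Lemma \ref{l:Weakening}) back to $\Delta$ then give $\pi\in Cl^{0}(\forall x.A)_\phi$. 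Making this freshness argument fully precise, together with the abstraction case, is where essentially all the work lies; everything else is bookkeeping on the closure levels.
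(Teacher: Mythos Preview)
Your proof follows the same overall strategy as the paper's: the $\subseteq$ direction is identical (induction on the minimal level $k$), and for $\supseteq$ both invoke Lemma~\ref{mink} to obtain a uniform level bounded by the maximal reduction length and then induct. Two differences deserve comment.

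First, your threefold case split in the inductive step of $\supseteq$ (neutral non-normal / normal / abstraction with non-normal body) is more work than needed. The paper treats every non-normal $\pi$ uniformly, ``as usual'', by picking any redex $\mu$ inside $\pi$ and writing $\pi=[\mu/\alpha]\nu$, exactly as in the proof of Lemma~\ref{mink}. Since (CR$_3'$) only requires the substituted $\mu$ to be neutral and non-normal---which any redex is---and imposes no condition on $\pi$ itself, the abstraction case needs no separate argument ``in the spirit of Lemma~\ref{lambdacl}''; the decomposition $\pi=[\mu/\alpha](\lambda\beta.\nu')$ with $\mu$ a redex in the body works directly.

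Second, for the base case the paper simply takes $t:=x$, obtaining $\pi\in Cl^0(A)_{\phi+\langle x,x\rangle}$, i.e.\ $\Delta\vdash\pi:\phi A$, and then applies $\forall$-intro to get $\Delta\vdash\pi:\forall x.\phi A$ without further comment. Your worry about the side condition $x\notin FV(\Delta)$ is well placed, since the universal context $\Delta$ declares proof-variables of every proposition and hence has every term-variable free; your proposed fix via restriction to a finite subcontext, choice of a fresh variable, substitutivity (Lemma~\ref{l:Substitutivity}) and weakening (Lemma~\ref{l:Weakening}) is the standard way to make this step rigorous, and the paper simply leaves it implicit.
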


\proof\hfill
\begin{iteMize}{$\bullet$}
		 \item[$\subseteq$] Let $\pi \in Cl(\forall x.A)_\phi$, then there exists (a minimal) $k \in \N$ such that \mbox{$\pi \in Cl^k(\forall x.A)_\phi$.}
		 	By induction on $k$. 
			\begin{iteMize}{$-$}
		 		\item[-] If $k=0$, $\Delta \vdash \pi : \phi(\forall x.A)$, hence for all terms $t$, $\Delta \vdash \pi : (t/x)\phi(A)$ and $\pi \in Cl^0(A)_{\phi + \langle x,t \rangle}$ by lemma \ref{clsubst}.
				  \item[-] If $k>0$, then $\pi = [\mu_i/\alpha_i]_{i} \nu$, with $(\mu_i)_i \subseteq \Omega$ and such that for all $(\rho_i)_{i}$ with 
				  	$ (\mu_i)_i \ra (\rho_i)_i$, we have $[\rho_i/\alpha_i]_{i} \nu \in Cl^{k-1}(\forall x.A)_\phi \subseteq  \tfa ~\{ Cl(A)_{\phi + \langle x,t \rangle}$, for $t$ term$\}$, by induction hypothesis. Since all the $Cl(A)_{\phi + \langle x,t \rangle}$ satisfy (CR$_3'$), $\pi$ also belongs to each of those 
					$Cl(A)_{\phi + \langle x,t \rangle}$.
			\end{iteMize}
			\smallskip
			\item[$\supseteq$] As seen in lemma \ref{mink}, if $\pi$ belongs to some $Cl(B)_\psi$ then there exists some $k$, less or equal than the maximal length of reductions from $\pi$, such that $\pi \in Cl^k(B)_\psi$. Hence, in our case, if $\pi \in \tfa ~\{ Cl(A)_{\phi + \langle x,t \rangle}$, for $t$ term$\}$, then  there exists $k$ such that for all terms $t$, $\pi \in Cl^k(A)_{\phi + \langle x,t \rangle}$. We reason by induction on $k$. If $k = 0$, we have, in particular $\pi \in Cl^0(A)_{\phi + \langle x,x \rangle}$, i.e. $\Delta \vdash \pi : \phi A$ hence $\Delta \vdash \pi : \phi (\forall x. A)$ (of course we can suppose that $x$ is not bound in $\phi$). If $k > 0$, then we conclude by the fact that $Cl(\forall x.A)_\phi$ satisfies (CR$_3'$), as usual.\qed
\end{iteMize}

\medskip

\noindent We finally get the completeness result:

\begin{theorem}[Completeness]
	\label{comp}\hfill\\
	If a theory is strongly normalizing in minimal deduction modulo \`a la Curry, then $Cl(.)_.$ is a  $\cprime$-valued model of that theory.
	
\end{theorem}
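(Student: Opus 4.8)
The plan is to recognize that Theorem~\ref{comp} is essentially an assembling step. By the definition of a model (Model~(2)), showing that $Cl(.)_.$ is a $\cprime$-valued model of the theory amounts to checking four things, each of which has been established in the preceding lemmas under the strong normalization hypothesis: first, that the interpretation takes its values in the domain $\cprime$; second, that it is adapted to the congruence; third, that it satisfies the substitution property; and fourth, that it is adapted to the two connectives $\Ra$ and $\forall$.

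First I would invoke Lemma~\ref{clc'}, which gives exactly that $Cl(A)_\phi \in \cprime$ for every proposition $A$ and environment $\phi$, \emph{provided} the theory is strongly normalizing; this is the only place the hypothesis of the theorem is used, and it enters only through the verification of (CR$_1$). The substitution property, $Cl((t/x)A)_\phi = Cl(A)_{\phi + \langle x,t\rangle}$, is Lemma~\ref{clsubst}, and the adaptation to the connectives is precisely Lemmas~\ref{clramorph} (for $\Ra$) and~\ref{clfamorph} (for $\forall$); in the Curry setting $\tfa$ is plain set-intersection, which is what makes the latter statement, $Cl(\forall x.A)_\phi = \tfa~\{Cl(A)_{\phi+\langle x,t\rangle},\text{ for } t \text{ term}\}$, clean to establish.

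The one clause requiring a short direct argument is adaptation to the congruence: I would show that $A \equiv B$ implies $Cl(A)_\phi = Cl(B)_\phi$. The base case reduces to $Cl^0(A)_\phi = Cl^0(B)_\phi$, which holds because $\phi A$ and $\phi B$ admit exactly the same proofs in $\Delta$ --- the \textsc{axiom} rule, and indeed every typing rule, permits a proposition to be replaced by any congruent one, so $\Delta \vdash \pi : \phi A$ if and only if $\Delta \vdash \pi : \phi B$. Since the two closures are then built by the identical (CR$_3'$)-iteration from equal bases, a straightforward induction on the closure level $k$ gives $Cl^k(A)_\phi = Cl^k(B)_\phi$ for all $k$, hence $Cl(A)_\phi = Cl(B)_\phi$. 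Collecting the four points yields that $Cl(.)_.$ is a $\cprime$-valued model, which is the assertion of the theorem.

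Since all the genuine difficulty has been front-loaded into the supporting lemmas, the theorem itself is routine. The real obstacle lies upstream: the hardest components are the proof of (CR$_1$) inside Lemma~\ref{clc'} (the standardization-style argument separating reductions in $\nu$ from reductions in the $\mu_i$) and the $\supseteq$ direction of Lemma~\ref{clramorph}, where Lemma~\ref{lambdacl} is needed to push a $\lambda$-abstraction through the closure. Granting those, nothing beyond bookkeeping remains at this final step.
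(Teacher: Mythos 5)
Your proposal is correct and follows essentially the same route as the paper: the paper's proof of Theorem~\ref{comp} is literally ``By lemmas \ref{clc'}, \ref{clramorph} and \ref{clfamorph}'', with the substitution property (Lemma~\ref{clsubst}) and the congruence-adaptation (which the paper dismisses as trivial since congruent propositions have the same proofs) noted in the surrounding text exactly as you spell them out. Your explicit induction on the closure level $k$ for congruence-adaptation is just a slightly more careful write-up of what the paper takes for granted.
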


\begin{proof} By lemmas  \ref{clc'}, \ref{clramorph} and \ref{clfamorph}.
\end{proof}

\noindent Notice that the strong normalization hypothesis is only used once in the proof of completeness, when proving that the interpretation of a proposition and an environment only contains strongly normalizing proof-terms.

\bigskip

\subsection{Semantic characterization of strong normalization}\hfill\\

\noindent From Theorems \ref{sound} and \ref{comp}, we prove that the existence of a $\cprime$-valued model is a sound and complete semantic criterion for strong normalization of theories expressed in minimal deduction modulo.

\begin{theorem}
\label{compsound}
	\hfill\\
	A theory is strongly normalizing in minimal deduction modulo \`a la Curry if and only if there exists a   $\cprime$-valued model of that theory.
\end{theorem}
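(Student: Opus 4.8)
The plan is to observe that this biconditional is an immediate consequence of the two theorems already established, so that no new argument is required: the substantive work has all been carried out in proving soundness and completeness separately. First I would dispatch the right-to-left implication. Suppose the theory admits a $\cprime$-valued model; then by Theorem \ref{sound} (Soundness) the theory is strongly normalizing. This direction rests entirely on the adequacy lemma (Lemma \ref{adq2}), which guarantees that any well-typed proof-term, under an adequate substitution, lands in the interpretation of its type and is therefore in $SN$ by (CR$_1$).

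For the left-to-right implication I would invoke Theorem \ref{comp} (Completeness). Assuming the theory is strongly normalizing in minimal deduction modulo \`a la Curry, the explicitly constructed interpretation $Cl(.)_.$ is a $\cprime$-valued model of the theory. Here one relies on the fact that $Cl(.)_.$ takes its values in $\cprime$ when the theory is strongly normalizing (Lemma \ref{clc'}), is trivially adapted to the congruence, satisfies the substitution property (Lemma \ref{clsubst}), and is adapted to both connectives (Lemmas \ref{clramorph} and \ref{clfamorph}). In particular this produces a witnessing model, so a $\cprime$-valued model exists.

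Combining these two implications yields the desired equivalence. There is essentially no obstacle at the level of this final theorem itself: the only genuine difficulties have already been absorbed into the proofs of Theorems \ref{sound} and \ref{comp}. If I were to point to where the real content lies, it would be the verification that $Cl(.)_.$ satisfies (CR$_1$)---the single place where the strong normalization hypothesis is used---together with the adequacy argument underpinning soundness. The present statement merely packages these two results as a semantic characterization, so the proof is a one-line appeal to the preceding theorems.
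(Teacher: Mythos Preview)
Your proposal is correct and matches the paper's own approach exactly: the theorem is presented as an immediate corollary of Theorems~\ref{sound} and~\ref{comp}, with no additional argument. Your elaboration of which lemmas underpin each direction is accurate and even more detailed than what the paper provides at this point.
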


\noindent Remind that we obtain this result for both confusing and non-confusing theories unlike what we obtained in \cite{pstt} and \cite{Theseoim} for minimal deduction modulo \`a la Church.

\medskip

\noindent  Moreover,  for a theory, strong normalization in Curry style entails strong normalization in Church style \cite{CouMiq}, hence this {\em Curry}-criterion is also a sound criterion for strong normalization of {\em Church-style} proof-terms. 
The existence of a (Curry) $\cprime$-valued model entails strong normalization of both Curry-style and Church-style proof-terms of the considered theory.

\medskip

\noindent We also conjecture that strong normalization in Church style and in Curry style are equivalent, in which case, the criterion provided in this paper would also be complete for strong normalization in minimal deduction \`a la Church.

\bigskip
\bigskip

\section*{Perspectives}

\noindent In this paper, we provide a sound and complete semantic criterion for strong normalization of theories expressed in minimal deduction \`a la Curry. To obtain that semantic criterion, we use the extension (CR$_3'$) of the usual (CR$_3$) property of reducibility candidates, that was introduced in \cite{pstt,Theseoim} to provide a sound and complete criterion for strong normalization in minimal deduction \`a la Church. Considering proof-terms \`a la Curry, instead of \`a la Church, greatly simplifies both the definition of that criterion and the proof of its completeness, since it allows to use the classical notion of pre-Heyting algebras and to define the interpretation of the universal quantification as usual intersection.

\bigskip

\noindent  This sound and complete semantic criterion for strong normalization brings up a second link between proof-theory and model-theory. G\"odel-Tarski first link allowed to prove fundamental theorems about consistency of first order theories. As mentioned in the introduction, it allowed to prove the independence of the axiom of choice from Zermelo-Fraenkel set theory, using model-theory techniques, like Fraenkel-Mostowski permutation and Cohen's forcing, to entail from ZF consistency, both ZFC and ZF$\neg$C consistencies. This G\"odel-Tarski link also allowed to prove the independence of the continuum hypothesis from ZFC \cite{GodelcontCH,CohenCH}, via the same method.

\medskip

\noindent The ambitious goal I propose to pursue is to use the semantic criterion for strong normalization defined in this paper, and reproduce G\"odel-Cohen method for obtaining independence (for strong normalization) of the axiom of choice from ZF. This implies different steps: first extend Dowek-Miquel embedding of Zermelo set theory in deduction modulo in order to embed the whole Zermelo-Fraenkel set theory. Second, extend our semantic criterion for strong normalization to the whole deduction modulo (using also Curry-style proof-terms for existential quantification). And finally build, from a $\cprime$ model of ZF, two other models of ZF, one that is compatible with the axiom of choice and one that is not. We could, for that purpose, add to the language Hilbert's operator of choice $\varepsilon$. Then we may try to adapt Fraenkel-Mostowski's permutation method in order to build a $\cprime$-model of ZF in which all interpretations of a formula of the form $\exists x.A$ are equal to the interpretation of $(\varepsilon(A)/x)A$. And finally, we may try to adapt Cohen's forcing method in order to build another $\cprime$-model of ZF in which the previous property is not satisfied.  
\medskip

\noindent Finally, this paper can also have practical spin-offs concerning proof assistants and proof checkers.  
Dedukti \cite{dedukti} is a universal proof-checker that can check now proofs produced by the proof-assistants Coq \cite{Coq} and HOL-Light \cite{hollight}. Dedukti is based on the formalism of $\lambda \Pi$-calculus modulo \cite{PTS}, the extension of deduction modulo with dependent types. Dedukti's checking relies on computation and therefore on strong normalization of the theory that expresses the logical formalism of some proof assistant. 
The present paper gives techniques to prove relative normalizations of theories that represent proof assistants. For example we could prove that the union of theories representing Coq and HOL-Light is strongly normalizing, given the fact that one of them is strongly normalizing. That would allow to check simultaneously proofs coming from both proof assistants and ensure that way that the associated developments in each proof assistant can be somehow combined.




%

\bigskip
\bigskip

\newpage
\bibliographystyle{plain}
\bibliography{Bibliography}

\end{document}